
\documentclass[12pt, journal, draftclsnofoot, onecolumn]{IEEEtran}

\usepackage{cite}
\usepackage{mdwmath}
\usepackage[english]{babel}
\usepackage{epsfig}
\usepackage{easymat}
\usepackage{amsmath}
\usepackage{float}
\usepackage{cite}
\usepackage{graphicx}
\usepackage{balance}
\usepackage{subfigure}
\newtheorem{theorem}{Theorem}
\newtheorem{corollary}{Corollary}
\newtheorem{Lemma}{Lemma}

\pagestyle{empty}

\begin{document}
\renewcommand{\textfraction}{0}

\title{Capacity of The Discrete-Time Non-Coherent Memoryless Rayleigh Fading Channels at Low SNR}
\author{Z. Rezki and David Haccoun \\ \small{Department of Electrical
Engineering, \'Ecole Polytechnique de Montr\'eal,\\ Email:
\{zouheir.rezki,david.haccoun\}@polymtl.ca,} \\
\large{Fran\c{c}ois Gagnon}
\\
\small{Department of Electrical Engineering, \'Ecole de
technologie sup\'erieure,\\ Email: francois.gagnon@etsmtl.ca } }
\date{}
\maketitle \thispagestyle{empty}
\begin{abstract}
The capacity of a discrete-time memoryless channel, in which
successive symbols fade independently, and where the channel state
information (CSI) is neither available at the transmitter nor at
the receiver, is considered at low SNR. We derive a closed form
expression of the optimal capacity-achieving input distribution at
low signal-to-noise ratio (SNR) and give the exact capacity of a
non-coherent channel at low SNR. The derived relations allow to
better understanding the capacity of non-coherent channels at low
SNR and bring an analytical answer to the peculiar behavior of the
optimal input distribution observed in a previous work by Abou
Faycal, Trott and Shamai. Then, we compute the non-coherence
penalty and give a more precise characterization of the sub-linear
term in SNR. Finally, in order to better understand how the
optimal input varies with SNR, upper and lower bounds on the
capacity-achieving input are given.
\end{abstract}
\begin{keywords}
Capacity, non-coherent fading channels, energy efficiency.
\end{keywords}
\normalsize

\section{INTRODUCTION}
In wireless communication, the channel estimation at the receiver
is not often possible due, for instance, to the high mobility of
the sender or the receiver or both. Therefore, achieving reliable
communication over fading channels where the channel state
information (CSI) is available neither at the transmitter nor at
the receiver, is of a particular interest. Establishing the
performance limits, in terms of channel capacity, error
probability, etc.., in such a non-coherent scenario has recently
motivated extensive works (see for example \cite{Verdu-02},
\cite{Medard-2000}). When CSI is available at the receiver, the
channel capacity, commonly known as the coherent capacity has been
studied by Ericson \cite{Ericson-70} for a Single Input Single
Output (SISO) channel and recently by many other authors for a
Multiple Input Multiple Output (MIMO) channel \cite{Foschini96}
\cite{ar-capacity}. Conversely, when CSI is not available at both
ends, computing the channel capacity, known as the non-coherent
capacity, as well as computing the optimal input distribution
achieving this capacity, for both SISO and MIMO channels, is a
rather tedious task \cite{Abou-Faycal-01} \cite{perera-06}. The
main difficulty in computing the non-coherent capacity relies on
the fact that the capacity-achieving input distribution is
discrete with a finite number of mass points, where one of them is
located at the origin. The number of these mass points increases
with the signal-to-noise ratio (SNR). Since no bound on the number
of mass points with respect to SNR is actually available, it is
very difficult to find closed form expressions for both the
achievable capacity and the optimal input distribution for all SNR
values. Fortunately, numerical computation of the capacity and the
optimal input distribution has been made possible using the
Khun-Tucker condition which is a necessary and sufficient
condition for optimality, for of a SISO channel
\cite{Abou-Faycal-01} and for a MIMO channel \cite{perera-06}.\

Earlier in 1999, using a block fading channel, Marzetta and
Hochwald have obtained the structure of the optimal input, with
explicit calculations for the special case of a SISO channel at
high SNR values or with a large coherence time \cite{Marzetta}.
The non-coherent capacity was also computed as a function of the
number of transmit and receive antennas as well as the coherence
time at high SNR in \cite{Zheng-2002}. At a low SNR regime, it was
also shown in \cite{Zheng-2002} that to a first order of magnitude
of the SNR, there is no capacity penalty for not knowing the
channel at the receiver which is not the case at the high SNR
regime. It has been well established previously that at low SNR,
just like in an additive white Gaussian noise (AWGN) channel, the
capacity of a fading channel varies linearly with the SNR
regardless of whether or not the CSI is available at the receiver
\cite{Kennedy-book}, \cite{Telatar2000}. Recently, this power
efficiency at a low SNR regime or equivalently at a large channel
bandwidth has motivated work towards a better understanding of the
non-coherent capacity at a low SNR regime \cite{Verdu-02},
\cite{Zheng-2007}, \cite{Ray-2007} for both SISO and MIMO channels
using several fading models.\

In this paper, we analyze the capacity of a discrete time
non-coherent memoryless Rayleigh fading SISO channel at low SNR.
The main contributions of this paper are:
\begin{enumerate}
    \item Derivation of an analytical closed form of the channel
    mutual information at low SNR, which may also be considered as a
    lower bound on the channel mutual information for an arbitrary
    SNR value.
    \item Derivation of a fundamental relation between the capacity-achieving input distribution
    and the SNR value, from which an exact capacity expression is deduced at low SNR.
    \item Derivation of novel upper and lower bounds on the non-zero mass point
    location of the optimal input, which allow to deduce lower and upper
    bounds respectively on the non-coherent capacity at low SNR.
\end{enumerate}

The paper is organized as follows. Section \ref{model} presents
the system model. In section \ref{mut-info}, we derive a closed
form expression of the channel mutual information at low SNR which
is also a lower bound on the channel mutual information at all SNR
values. The optimal input distribution as well as the non-coherent
capacity are presented in Section \ref{cap}. Numerical results are
reported in Section \ref{result} and Section \ref{conclusion}
concludes the paper.

\section{CHANNEL MODEL}\label{model}

We consider a discrete-time memoryless Rayleigh-fading channel
given by:
\begin{equation}\label{E1}
    r(l) = h(l) s(l) + w(l), \hspace{1cm} l=1,2,3,...
\end{equation}
where $l$ is the discrete-time index, $s(l)$ is the channel input,
$r(l)$ is the channel output, $h(l)$ is the fading coefficient and
$w(l)$ is an additive noise. More specifically, $h(l)$ and $w(l)$
are independent complex circular Gaussian random variables with
mean zero and variances $\sigma_{h}^{2}$ and $\sigma_{w}^{2}$,
respectively. The input $s(l)$ is subject to an average power
constraint, that is $E[|s(l)|^{2}] \leq P$, where $E[.]$ indicates
the expected value. It is assumed that the channel state
information is available neither at the transmitter nor at the
receiver. However, even though the exact values of $h(l)$ and
$w(l)$ are not known, their statistics are, at both ends.

Model (\ref{E1}) appears for example during the decomposition of a
wideband channel into parallel noninteracting channels, or when a
narrow-band signal is hopped rapidly over a large set of
frequencies, one symbol per hop \cite{Verdu-02}.

Since the channel defined in (\ref{E1}) is stationary and
memoryless, the capacity achieving statistics of the input $s(l)$
are also memoryless, independent and identically distributed
(i.i.d). Therefore, for simplicity we may drop the time index $l$
in (\ref{E1}). Consequently, the distribution of the channel
output $r$ conditioned on the input $s$ can be obtained after
averaging out the random fading coefficient $h$, yielding:
\begin{equation}\label{E2}
f_{r|s}(r|s)=\frac{1}{\pi(\sigma_{h}^{2}|s|^{2}+\sigma_{w}^{2})}\exp\left[\frac{-|r|^2}{\sigma_{h}^{2}|s|^{
2}+\sigma_{w}^{2}}\right].
\end{equation}
Noting that in (\ref{E2}), the conditional output distribution
depends only on the squared magnitudes $|s|^{2}$ and $|r|^{2}$, we
will no longer be concerned with complex quantities but only with
their squared magnitudes. Conditioned on the input, $|r|^{2}$ is
chi-square distributed with two degrees of freedom:
\begin{equation}\label{E3}
f_{|r|^{2}|s}(t|s)=\frac{1}{(\sigma_{h}^{2}|s|^{2}+\sigma_{w}^{2})}\exp\left
[\frac{-t}{\sigma_{h}^{2}|s|^{2}+\sigma_{w}^{2}}\right].
\end{equation}
Normalizing to unit variance, let $y=|r|^{2}/\sigma_{w}^{2}$ and
let $x=|s|\sigma_{h}\sigma_{w}$. Then (\ref{E3}) may be written
more conveniently as:
\begin{equation}\label{E4}
    f_{y|x}(y|x)=\frac{1}{(1+x^2}\exp\left[\frac{-y}{1+x^2}\right],
\end{equation}
with the average power constraint $E[x^{2}] \leq a$, where
$a=P\sigma_{h}^{2}/\sigma_{w}^{2}$ is the SNR per symbol time.

\section{THE CHANNEL MUTUAL INFORMATION}\label{mut-info}

For the channel (\ref{E4}), the mutual information is given by
\cite{Gallager-book}:
\begin{equation}\label{E5}
    I(x;y)=\int \int
f_{y|x}(y|x)f_{x}(x)\ln{\frac{f_{y|x}(y|x)}{f_{(y;x)}(y;x)}} dx
    dy.
\end{equation}
The capacity of channel (\ref{E4}) is the supremum
\begin{equation}\label{E6}
    C=\sup_{E[x^{2}] \leq a}I(x;y)
\end{equation}
over all input distributions that meet the constraint power. The
existence and uniqueness of such an input distribution was
established in \cite{Abou-Faycal-01}. More specifically, the
optimal input distribution for channel (\ref{E4}) is discrete with
a finite number of mass points, where one of them is necessarily
null. That is, the capacity (\ref{E6}) is expressed by
\begin{equation}\label{E7}
    C=\max_{E[x^{2}] \leq a}\sum_{i=0}^{N-1} p_{i} \int_{0}^{\infty}
    f_{y|x_{i}}(y|x_{i}) \ln \left[\frac{f_{y|x_{i}}(y|x_{i})}{\sum
    _{j}p_{j}f_{y|x_{j}}(y|x_{j})}\right]dy,
\end{equation}
where $x_{0}=0<x_{1}<x_{2} \ldots <x_{N-1}$ are the mass point
locations and where $p_{0},p_{1} \ldots ,p_{N-1}$ their
probabilities respectively. This optimization problem is very
difficult since the number of discrete mass points, the optimum
probabilities and their locations are unknown. In
\cite{Abou-Faycal-01}, numerical evaluation of the capacity and
the optimum input distribution was given using the Khun-Tucker
condition which is necessary and sufficient for optimality. The
authors have found empirically that two mass points are optimal
for low SNR and that the number of mass points increases
monotonically with SNR. Many other papers have used these results
in order to further understand the non-coherent capacity and the
optimal input distribution behavior as the SNR approaches zero
\cite{Zheng-2007},\cite{Ray-2007}.

Since we focus on the low SNR regime, we may use in (\ref{E7}) a
discrete input distribution with two mass points, where one of
them is null, to obtain the optimal capacity at low SNR.
Furthermore, this on-off signaling also provides a lower bound on
the non-coherent capacity for all SNR values. Clearly, using
computer simulation, it was shown  in \cite{Abou-Faycal-01} that
on-off signaling provides a tight lower bound on the capacity for
the SNR values considered. That is, a lower bound on the capacity
may be expressed by:
\begin{equation}\label{E8}
    C_{LB}=\max_{E[x^{2}] \leq a}I_{LB}(x;y),
\end{equation}
where $I_{LB}(x;y)$ is a lower bound on the channel mutual
information $I(x;y)$ given by:
\begin{equation}\label{E9}
   I_{LB}(x;y)=I_{LB}(x_{1},p_{1})= \sum_{i=0}^{1} p_{i}
\int_{0}^{\infty}
    f_{y|x_{i}}(y|x_{i}) \ln \left[\frac{f_{y|x_{i}}(y|x_{i})}{\sum
    _{j}p_{j}f_{y|x_{j}}(y|x_{j})}\right]dy,
\end{equation}
and the average constraint power becomes: $p_{1}x_{1}^{2} \leq a$.
Note that the optimization problem in (\ref{E8}) is less complex
than in (\ref{E7}) since we deal with only two unknowns $p_{1}$ ND
$x_{1}$. Furthermore, it is proven below that further
simplifications can be obtained, using the fact that
$I_{LB}(x_{1},p_{1})$ is monotonically increasing in $x_{1}$ and
thus the problem at hand may be reduced to a simpler maximization
problem without constraint. We summarize this
result in lemma \ref{L1}.\\

\begin{Lemma}\label{L1}
The optimal capacity at low SNR and a lower bound on it for all
SNR values is given by:
\begin{equation}\label{E10}
    C_{LB}=\max_{x_{1} \geq \sqrt{a}} I_{LB}(x_{1},a),
\end{equation}
where $I_{LB}(x_{1},a)$ is the channel mutual information for a
given mass point location $x_{1}$ and a given SNR value $a$.
Furthermore, $I_{LB}(x_{1},a)$ may be written as:
\begin{equation}\label{E11}
    I_{LB}(x_{1},a)= \begin{cases} a-a \left[
    \frac{\ln{(1+x_{1}^{2})}}{x_{1}^{2}}+\frac{1}{1+x_{1}^{2}}
    +\frac{x_{1}^{2}}{1+x_{1}^{2}}\cdot
{}_{1}F_{2}\left(1,\frac{1}{x_{1}^{2}},1+\frac{1}{x_{1}^{2}},-\frac{(1+x_{1}
^{2})(x_{1}^{2}-a)}{a} \right)
    \right]  \\

-\ln{\left(1-\frac{a}{x_{1}^{2}}\right)}-\ln{\left(1+\frac{a}{(1+x_{1}^{2})(
x_{1}^{2}-a)}\right)} & \text{if $x_{1} > \sqrt{a}$},\\
0& \text{if $x_{1}= \sqrt{a}$}
\end{cases}
\end{equation}
where $_{2}F_{1} (\cdot,\cdot,\cdot,\cdot )$ is the Gauss
hypergeometric function.
\end{Lemma}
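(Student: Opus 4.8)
The plan is to establish the lemma in two stages: first the reduction of the constrained maximization (\ref{E8}) to the unconstrained form (\ref{E10}), and then the explicit evaluation of $I_{LB}(x_1,p_1)$ leading to the closed form (\ref{E11}). For the first stage, I would fix the mass-point location $x_1$ and study $I_{LB}(x_1,p_1)$ as a function of $p_1\in[0,1]$ subject to $p_1 x_1^2\le a$. The claim to prove is that $I_{LB}(x_1,p_1)$ is monotonically increasing in $p_1$ on the feasible interval; granting this, the optimum is attained at the largest admissible $p_1$, namely $p_1=a/x_1^2$, which forces $x_1\ge\sqrt a$ for feasibility ($p_1\le 1$) and turns the power constraint into equality $p_1 x_1^2=a$. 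Substituting $p_1=a/x_1^2$ into $I_{LB}(x_1,p_1)$ gives the one-variable function $I_{LB}(x_1,a)$ and reduces (\ref{E8}) to (\ref{E10}). The boundary case $x_1=\sqrt a$ forces $p_1=1$, i.e.\ a deterministic input carrying no information, hence $I_{LB}(\sqrt a,a)=0$.

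For the monotonicity in $p_1$, I would differentiate (\ref{E9}) with respect to $p_1$. Writing $I_{LB}=h(Y)-h(Y\mid X)$ where the output density is the mixture $f_Y(y)=(1-p_1)f_{y|0}(y)+p_1 f_{y|x_1}(y)$ and $h(Y\mid X)=(1-p_1)h(Y\mid X=0)+p_1 h(Y\mid X=x_1)$, the conditional-entropy term is linear in $p_1$, so only the output-entropy term needs care. The derivative of $h(Y)$ with respect to $p_1$ is $\int (f_{y|x_1}-f_{y|0})\big(-\ln f_Y(y)-1\big)\,dy$; since $\int(f_{y|x_1}-f_{y|0})\,dy=0$ the $-1$ drops out, and after a short computation $dI_{LB}/dp_1$ becomes a relative-entropy-like quantity that is manifestly nonnegative, or equivalently one invokes concavity of mutual information in the input distribution together with the fact that the zero input is the ``least informative'' point. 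Either route is routine; the cleaner one is probably to show directly that $\partial I_{LB}/\partial x_1\ge 0$ at fixed $p_1 x_1^2=a$ as the paper's phrasing suggests, but the $p_1$-monotonicity argument above is the more transparent one and suffices for the reduction.

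The substantive computation is the second stage: evaluating the two integrals in (\ref{E9}) in closed form. With $f_{y|x_i}(y|x_i)=\frac{1}{1+x_i^2}\exp\!\big(\!-\frac{y}{1+x_i^2}\big)$ and $x_0=0$, the conditional differential entropies are elementary: $\int_0^\infty f_{y|x_i}\ln f_{y|x_i}\,dy = -1-\ln(1+x_i^2)$. The remaining piece, $-\sum_i p_i\int_0^\infty f_{y|x_i}(y|x_i)\ln f_Y(y)\,dy$, is where the hypergeometric function enters: one must integrate $e^{-y/(1+x_i^2)}\ln\!\big[(1-p_1)e^{-y}+\tfrac{p_1}{1+x_1^2}e^{-y/(1+x_1^2)}\big]$ against the exponential weights. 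I would factor out the dominant exponential inside the logarithm, expand $\ln(1+u)$ as its power series in the ratio of the two exponential terms, integrate term by term (each term being a Gamma integral), and recognize the resulting series as a ${}_1F_2$ (or ${}_2F_1$) hypergeometric series in the argument $-\frac{(1+x_1^2)(x_1^2-a)}{a}$ after setting $p_1=a/x_1^2$; the stray logarithmic terms $-\ln(1-a/x_1^2)$ and $-\ln\!\big(1+\frac{a}{(1+x_1^2)(x_1^2-a)}\big)$ come from the boundary/normalization bookkeeping when the logarithm is split. The main obstacle is precisely this term-by-term integration and the correct identification and bookkeeping of the hypergeometric series — keeping track of which exponential dominates on which range, ensuring the series converges (which is what restricts the argument and explains the $x_1>\sqrt a$ versus $x_1=\sqrt a$ split), and matching parameters to the standard hypergeometric form; the entropy pieces and the monotonicity argument are comparatively mechanical.
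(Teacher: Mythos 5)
There is a genuine gap in your first stage. You propose to fix $x_1$ and argue that $I_{LB}(x_1,p_1)$ is monotonically increasing in $p_1$ on the feasible interval $[0,\min(1,a/x_1^2)]$, so that the optimum sits at $p_1=a/x_1^2$. But $I_{LB}(x_1,\cdot)$ is concave in $p_1$ and vanishes at \emph{both} endpoints $p_1=0$ and $p_1=1$ (both are deterministic inputs), so it rises to an interior maximum $p_1^*(x_1)$ and then falls; it is not monotone. Your derivative computation confirms this if carried through: $\partial I_{LB}/\partial p_1 = D\bigl(f_{y|x_1}\,\|\,f_y\bigr)-D\bigl(f_{y|0}\,\|\,f_y\bigr)$, a \emph{difference} of relative entropies with no definite sign (positive at $p_1=0$, negative at $p_1=1$), not a ``manifestly nonnegative'' quantity. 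Your reduction therefore only works when $a/x_1^2\le p_1^*(x_1)$, and it fails near the boundary $x_1=\sqrt{a}$ where the feasible interval is all of $[0,1]$; concavity plus ``zero is least informative'' does not repair this. The paper avoids the problem by going the other way: it fixes $p_1$ and proves $\partial I_{LB}/\partial x_1>0$ via a covariance-type lemma (if $g$ is strictly increasing and $m$ is the mean of $f$, then $\int (y-m)f(y)g(y)\,dy>0$, applied with $g(y)=\ln(f(y|x_1)/f(y))$, which is increasing because $f(y)/f(y|x_1)=p_1+p_0(1+x_1^2)e^{y(\frac{1}{1+x_1^2}-1)}$ is decreasing). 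Monotonicity in $x_1$ at fixed $p_1$ forces the power constraint to be active at the optimum for \emph{every} feasible point, which is exactly what (\ref{E10}) needs. The paper also proves that the supremum over $x_1\in[\sqrt{a},\infty[$ is attained at a finite interior point (boundedness via existence of capacity, plus $I_{LB}\to 0$ as $x_1\to\infty$); you omit this entirely.

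Your second stage is a reasonable sketch but is not the paper's route and is not carried to completion. The paper splits (\ref{E9}) into four integrals, computes the two entropy integrals elementarily, and for the cross terms performs a substitution and integration by parts to land on the tabulated integral $\int_1^\infty \frac{t^{-\alpha/(\alpha-1)}}{1+\beta t}\,dt=\frac{\alpha-1}{\alpha\beta}\,{}_2F_1\bigl(1,1+\frac{1}{\alpha-1},2+\frac{1}{\alpha-1},-\frac{1}{\beta}\bigr)$, then uses a contiguity identity between the two resulting ${}_2F_1$'s to reach (\ref{E11}) with argument $-1/\beta=-\frac{(1+x_1^2)(x_1^2-a)}{a}$. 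Your proposed term-by-term expansion of $\ln(1+u)$ runs into the convergence problem you yourself flag (the ratio of exponentials exceeds $1$ on an unbounded range of $y$), and you leave unresolved exactly the bookkeeping that constitutes the substantive content of the computation, including where the terms $-\ln(1-a/x_1^2)$ and $-\ln\bigl(1+\frac{a}{(1+x_1^2)(x_1^2-a)}\bigr)$ come from.
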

\begin{proof}
For convenience, the proof is presented in Appendix \ref{app-1}.
\end{proof}\

In Lemma \ref{L1}, the existence of a maximum for a given SNR
value $a$ is guaranteed by the continuity of $I_{LB}(x_{1},a)$ and
the fact that it is bounded with respect to $x_{1}$ over the
interval $[\sqrt{a},\infty[$. This can be readily seen in Fig.
\ref{Fig1} where we have plotted the lower bound $I_{LB}(x_{1},a)$
for different values of $a$. As can be seen in Fig. \ref{Fig1},
$I_{LB}(x_{1},a)$ has a maximum for the 3 SNR regimes. The
existence of such a maximum is also rigorously established in
Appendix \ref{app-1}. Clearly, as was discussed in Appendix
\ref{app-1}, the maximization (\ref{E10}) is reduced to solving
the equation $\frac{\partial}{\partial x_{1}}I_{LB}(x_{1},a)$ for
a given SNR value $a$. Ideally, an analytical solution would
provide an insight as to how the non-coherent capacity and the
optimal input distribution vary with the SNR. However, solving
such an equation for arbitrary SNR values is very ambitious since
it involves an analytical solution to a transcendental equations.
Nevertheless, it is of interest to focus on the low SNR regime to
get the benefit of some advantageous simplifications in order to
elucidate the non-coherent capacity behavior at low SNR.

\section{NON-COHERENT CAPACITY AT LOW SNR }\label{cap}

In this section, we will use Lemma \ref{L1} to derive a
fundamental analytical relation between the optimal input
distribution at a low SNR regime and the particular SNR value $a$.
We show in Theorem \ref{T1} that this fundamental relation holds
up to an order of $a$ strictly less than 2. As is shown below, the
derived relation is very useful since it allows computing the
optimal input distribution for a given SNR value $a$ while
providing a rigorous characterization as to  how the non zero mass
point locations and their probabilities vary with $a$. Moreover,
the derived relation may be used to compute the exact non-coherent
capacity at low SNR values.
\subsection{A fundamental relation between the optimal input
distribution and the SNR}\label{subsectionA}

We present the fundamental relation between the optimal input distribution and the SNR value in the following Theorem:\\
\begin{theorem}\label{T1}
At a low SNR value $a$, the optimal input probability distribution
for an order of magnitude of $a$ strictly less than 2, is given
by:
\begin{equation}\label{E12}
    f_{x}(x)= \begin{cases}
x_{1}& \text{with probability $p_{1}=\frac{a}{x_{1}^2}$},\\
0& \text{with probability $p_{0}=1-p_{1}$},
\end{cases}
\end{equation}
where $x_{1}$ is the solution of the equation:
\begin{equation}\label{E13}
    x_{1}^{2}-(1+x_{1}^{2})\ln(1+x_{1}^{2})-\pi
\left(\frac{a}{x_{1}^{2}+x_{1}^{4}}\right)^{\frac{1}{x_{1}^{2}}}\csc{\left(\frac{\pi}{x_{1}^{2}}\right)}\left[1+x_{1}^2-\pi\cot{\left(\frac{\pi}{x_{1}^{
2}}\right)}+\ln{\left(\frac{a}{x_{1}^{2}+x_{1}^{4}}
    \right)}
    \right]=0.
\end{equation}
Furthermore, the non-coherent channel capacity is given by:
\begin{equation}\label{E14}
    C(a,x_{1})=a-a \cdot
    \frac{\ln{(1+x_{1}^{2})}}{x_{1}^{2}}-a^{1+\frac{1}{x_{1}^{2}}} \cdot
\frac{\pi
\csc{\left(\frac{\pi}{x_{1}^{2}}\right)}\left(\frac{1}{x_{1}^{2}+x_{1}^{4}}\right)^{\frac{1}{x_{1}^{2}}}}{1+x_{1}^{2}}
\end{equation}
\end{theorem}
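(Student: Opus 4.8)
The plan is to start from Lemma~\ref{L1}, which has already reduced the capacity computation to the one-dimensional maximization $C_{LB}=\max_{x_{1}\ge\sqrt{a}}I_{LB}(x_{1},a)$ with $I_{LB}$ given in closed form by (\ref{E11}). The natural route is therefore to set $\frac{\partial}{\partial x_{1}}I_{LB}(x_{1},a)=0$ and extract the asymptotics as $a\to 0$. The first step is to obtain a tractable small-$a$ expansion of $I_{LB}(x_{1},a)$: the ${}_{1}F_{2}$ term and the two logarithmic terms in (\ref{E11}) must be expanded in powers of $a$, keeping track of the fact that $x_{1}$ itself may scale with $a$. Here the key observation (consistent with the Abou Faycal--Trott--Shamai numerics) is that the optimal $x_{1}$ does \emph{not} go to zero but stays bounded away from $\sqrt{a}$, so one expands treating $x_{1}$ as $O(1)$ while $a\to 0$; the awkward feature is the appearance of the non-integer power $a^{1/x_{1}^{2}}$, which comes from the $\ln(1-a/x_{1}^{2})$ and hypergeometric pieces and which is what ultimately produces the $a^{1+1/x_{1}^{2}}$ term in (\ref{E14}) and the $\csc(\pi/x_{1}^{2})$, $\cot(\pi/x_{1}^{2})$ in (\ref{E13}).

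Concretely, I would organize the expansion of $I_{LB}(x_{1},a)$ into a leading linear term $a\bigl(1-\frac{\ln(1+x_{1}^{2})}{x_{1}^{2}}\bigr)$ plus a correction term of order $a^{1+1/x_{1}^{2}}$ whose coefficient involves $(x_{1}^{2}+x_{1}^{4})^{-1/x_{1}^{2}}$ and a trigonometric factor; collecting these gives exactly (\ref{E14}) once $x_{1}$ is the optimizer. The simplification of the ${}_{1}F_{2}$ into elementary/trigonometric functions is presumably where an identity for ${}_{1}F_{2}\bigl(1,\tfrac{1}{x_{1}^{2}},1+\tfrac{1}{x_{1}^{2}},z\bigr)$ (which is essentially an incomplete-beta / lower-incomplete-gamma type object because two of the parameters differ by one) is used, yielding the $\csc$ and $\cot$ terms via the reflection formula $\Gamma(s)\Gamma(1-s)=\pi\csc(\pi s)$ applied with $s=1/x_{1}^{2}$. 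After that, differentiating the two-term expansion of $I_{LB}$ with respect to $x_{1}$, setting the derivative to zero, and multiplying through by the appropriate power of $a$ and $x_{1}$ should produce precisely the transcendental equation (\ref{E13}); the term $x_{1}^{2}-(1+x_{1}^{2})\ln(1+x_{1}^{2})$ is recognizable as $x_{1}^{4}\,\frac{d}{d(x_{1}^{2})}\bigl(\frac{\ln(1+x_{1}^{2})}{x_{1}^{2}}\bigr)$ type contribution from differentiating the linear term, while the bracketed factor with $\cot(\pi/x_{1}^{2})$ and $\ln(a/(x_{1}^{2}+x_{1}^{4}))$ comes from differentiating the $a^{1+1/x_{1}^{2}}$ correction (the $\ln a$ appearing because $\frac{d}{dx_{1}}a^{1/x_{1}^{2}}$ brings down a $\ln a$). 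Finally, the probability assignment $p_{1}=a/x_{1}^{2}$, $p_{0}=1-p_{1}$ is inherited directly from the active power constraint $p_{1}x_{1}^{2}=a$ used in Lemma~\ref{L1}, so no separate argument is needed there.

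The claim that the relation holds ``up to an order of $a$ strictly less than $2$'' has to be made precise: I would argue that all terms dropped in the expansion of $I_{LB}$ are $O(a^{2})$ (more precisely $o(a^{2-\epsilon})$, or $O(a^{1+2/x_{1}^{2}})$ if $x_{1}^{2}>1$, etc.), so that the error in the resulting capacity and optimality condition is of strictly higher order than the retained $a^{1+1/x_{1}^{2}}$ term whenever $x_{1}^{2}>1$ — which would also need to be checked as a property of the solution of (\ref{E13}) at small $a$. This bookkeeping of remainder orders is the step I expect to be the main obstacle, since it requires uniform control of the hypergeometric remainder and of the logarithmic terms in the regime where $x_{1}$ is bounded but $a\to 0$, and one must verify that the second-order terms genuinely do not interfere with either the location of the maximizer or the value of the capacity to the stated order. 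A secondary but still delicate point is confirming that the stationary point extracted from the truncated expansion is indeed the global maximum over $[\sqrt a,\infty)$ for small $a$ — this should follow from the shape of $I_{LB}(x_{1},a)$ exhibited in Fig.~\ref{Fig1} and the boundedness/continuity already invoked after Lemma~\ref{L1}, but it deserves an explicit remark. Once the remainder estimates are in place, assembling (\ref{E12}), (\ref{E13}) and (\ref{E14}) is essentially algebra.
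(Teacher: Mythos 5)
Your proposal follows essentially the same route as the paper's Appendix B: expand the closed form $I_{LB}(x_{1},a)$ of Lemma~\ref{L1} around $a=0$ with $x_{1}$ held fixed and greater than $1$, retain the $a$ and $a^{1+1/x_{1}^{2}}$ terms while discarding the $o(a^{2-\epsilon})$ remainder (valid since $1+1/x_{1}^{2}<2-\epsilon$ for $x_{1}>1$), then impose $\frac{\partial}{\partial x_{1}}I_{LB}(x_{1},a)=0$ to obtain (\ref{E13}) and read off (\ref{E14}), with $p_{1}=a/x_{1}^{2}$ inherited from the active power constraint established in the proof of Lemma~\ref{L1}. The only cosmetic difference is that the paper produces the expansion symbolically (via Mathematica) rather than through the explicit reflection-formula manipulation you sketch; the structure, the remainder bookkeeping, and the use of the stationarity condition are identical.
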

\begin{proof}
For convenience, the proof is presented in Appendix \ref{app-2}.
\end{proof}\

Clearly, (\ref{E13}) is also a transcendental equation, for which
determining an analytical solution is a very tedious task.
Although it is very involved to derive an analytical solution of
(\ref{E13}) in the form of $x_{1}=f(a)$, it is of interest from an
engineering point of view, to resolve (\ref{E13}) numerically and
obtain the optimal $x_{1}$ for a given SNR value $a$. One may then
get the value of the non-coherent capacity by replacing in
(\ref{E14}) the obtained value of $x_{1}$. Moreover, (\ref{E13})
provides some insight on the behavior of $x_{1}$ as $a$ tends
toward zero. For example, using (\ref{E13}), one may determine the
limit of $x_{1}$ as $a$ tends toward zero. To see this, let $M$ be
this limit and let us assume that $M$ is finite. From Appendix
\ref{app-2}, we know that for the optimal input distribution, the
non-zero mass point location $x_{1}$ is greater than one. Thus,
its limit as $a$ tends toward zero is greater or equal than one $M
\geq 1$. Then, taking the limits on both sides of (\ref{E13}) as
$a$ goes to zero yields:
\begin{equation}\label{Elimitx1}
    M^2-(1+M^2)\ln{(1+M^2)}=0.
\end{equation}
That is, if $M$ is finite, it would be equal to zero, the unique
solution to (\ref{Elimitx1}), but this is impossible since $M \geq
1$. Hence, consistently with \cite{Abou-Faycal-01,Zheng-2007},
$\underset{a \rightarrow 0}{\lim}x_{1}=\infty$ . Furthermore, we
have found that (\ref{E13}) may be written in a more convenient
way as:
\begin{equation}\label{E15}
     a= \exp{\left[x_{1}^{2}W\bigl(k,\varphi{(x_{1})}\bigr)-x_{1}^{2}+\pi
\cot{\bigl(\frac{\pi}{x_{1}^{2}}\bigr)}+\ln{(x_{1}^{2})}+\ln{(1+x_{1}^{2})}-
1
    \right]},
\end{equation}
with $k=-1$ if $a \leq a_{0}$ and $k=0$ elsewhere, and where
$W{(\cdot,\cdot)}$ is the Lambert function, with $\varphi{(x)}$
given by:
\begin{equation}\label{E16}
\varphi{(x)}=-\frac{\sin{(\frac{\pi}{x^{2}})}(-x^{2}+\ln{(1+x^{2})}+x^{2}
\ln{(1+x^{2})}) }{\pi x^{2}} \cdot \exp{\left( \frac{-\pi
\cot{\left(\frac{\pi}{x^{2}}\right)}}{x^{2}} +1+\frac{1}{x^{2}}
\right)}.
\end{equation}
Also, $a_{0}$ is the solution of (\ref{E13}) for $x_{1}=x_{0}$,
where $x_{0}$ is the root of the equation
$\varphi{(x)}=-\frac{1}{e}$. The number $-\frac{1}{e}$ comes out
in our analysis from the fact that it is the unique point shared
by the principal branch of the Lambert function $W(0,x)$ and the
branch with $k=-1$, $W(-1,x)$. That is
$W(0,-\frac{1}{e})=W(-1,-\frac{1}{e})$. This guarantees the
continuity of $a$ in (\ref{E15}) for all $x_{1}$ values.
Numerically, we have found that $a_{0}=0.0582$ and
$x_{0}=\sqrt{3.93388}$. Hence, (\ref{E15}) may also be viewed as a
fundamental relation between the optimal input distribution and
$a$ for discrete-time non-coherent memoryless Rayleigh fading
channels at low SNR. On the other hand, (\ref{E15}) provides the
global answer as to how the non-zero mass point location of the
optimal on-off signaling and the SNR are linked together. For this
purpose, a simple analysis of (\ref{E15}) has been done and some
important results are recapitulated in the following
corollary.\\

\begin{corollary}\label{C1}
At low SNR, we have:
\begin{enumerate}
     \item For all $a \leq a_{0}$, $a_{0}=0.0582$, $a$ is an decreasing function
    with respect to $x_{1}$ and for all $a > a_{0}$, $a$ is an
    increasing function of $x_{1}$.
    \item For all $a$, $x_{1} \geq x_{0}$, where $x_{0}=\sqrt{3.93388}$.\label{C12}
    \item $\underset{x_{1}\rightarrow \infty}{\lim}{a}=0$.
\end{enumerate}
\end{corollary}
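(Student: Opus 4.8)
The plan is to work entirely from relation (\ref{E15}), which expresses $a$ explicitly as a function of $x_1$ (through the Lambert function $W(k,\varphi(x_1))$ and the auxiliary function $\varphi$ of (\ref{E16})). The three claims are, respectively, a monotonicity statement split at the threshold $a_0$, a uniform lower bound $x_1 \ge x_0$, and a limit as $x_1 \to \infty$. I would treat them in the order (2), (1), (3), since the lower bound on $x_1$ localizes the domain on which the monotonicity and limiting analysis must be carried out, and because all three hinge on the same object: the sign and range of $\varphi(x)$ together with the branch structure of $W$.

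First I would establish claim (2). By the definition given in the text, $x_0$ is the root of $\varphi(x) = -1/e$, and $-1/e$ is exactly the branch point where $W(0,\cdot)$ and $W(-1,\cdot)$ coincide and below which $W$ is real-valued only on $[-1/e,0)$. So I would show that $\varphi(x)$ is monotone (increasing) through the relevant range and that $\varphi(x) \ge -1/e$ forces $x \ge x_0$; equivalently, for $x<x_0$ one has $\varphi(x) < -1/e$, so that $W(k,\varphi(x))$ is not real and (\ref{E15}) produces no admissible (real, positive) SNR value $a$. This pins the optimal non-zero mass point to the region $x_1 \ge x_0 = \sqrt{3.93388}$, consistent with the remark already made in the excerpt (from Appendix \ref{app-2}) that $x_1 > 1$.

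Next, claim (1): I would compute $\frac{da}{dx_1}$ by logarithmic differentiation of (\ref{E15}). Writing $\ln a$ as the bracketed expression, $\frac{d}{dx_1}\ln a$ is a sum of elementary derivatives — of $x_1^2 W(k,\varphi(x_1))$ (using $\frac{d}{dz}W(z) = \frac{W(z)}{z(1+W(z))}$), of $-x_1^2$, of $\pi\cot(\pi/x_1^2)$, and of the two logarithms. The key observation is that the branch index $k$ switches from $-1$ (for $a \le a_0$) to $0$ (for $a>a_0$) precisely at $x_1 = x_0$, where $W(-1,-1/e) = W(0,-1/e) = -1$, so $a$ and its value are continuous across the switch. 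On the branch $k=-1$ one has $W \le -1$, hence $1 + W \le 0$, while on the branch $k=0$ over $[-1/e,0)$ one has $-1 \le W \le 0$, hence $1 + W \ge 0$; this sign flip of the factor $(1+W)$ in the derivative of the $W$-term is what drives the change from $a$ decreasing (for $x_1 < x_0$, i.e. $a \le a_0$) to $a$ increasing (for $x_1 > x_0$, i.e. $a > a_0$). I would collect the remaining elementary terms and verify that they do not overturn this sign — this is the part that requires genuine care, since $\pi\cot(\pi/x_1^2)$ and $\ln(1+x_1^2)$ contribute terms of competing signs, and one must check the combined inequality on the whole admissible range $[x_0,\infty)$ rather than just asymptotically.

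Finally, claim (3) follows by taking $x_1 \to \infty$ in (\ref{E15}) on the branch $k=0$. As $x_1 \to \infty$, $\varphi(x_1) \to 0^-$ from (\ref{E16}) (the prefactor $\sin(\pi/x_1^2)/(\pi x_1^2) \sim 1/x_1^4 \to 0$ dominates), so $W(0,\varphi(x_1)) \to 0$ and hence $x_1^2 W(0,\varphi(x_1)) \to 0$ after checking the rate (one needs $W(0,\varphi(x_1)) = o(1/x_1^2)$, which holds since $W(0,z)\sim z$ near $0$ and $\varphi(x_1) = O(1/x_1^4)$ up to logarithmic factors). Likewise $\pi\cot(\pi/x_1^2) \to +\infty$ like $x_1^2$, but it is exactly cancelled to leading order inside the bracket — indeed the dominant surviving behavior is $\ln(x_1^2)+\ln(1+x_1^2) - x_1^2 + \pi\cot(\pi/x_1^2) \to -\infty$ because $\pi\cot(\pi/x_1^2) = x_1^2 - \frac{\pi^2}{3x_1^2} + o(1/x_1^2)$ so the $x_1^2$ terms cancel and the $\ln$ terms cannot save the bracket from $\to -\infty$. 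Hence the exponent $\to -\infty$ and $a \to 0$. The main obstacle I anticipate is the bookkeeping in claim (1): assembling the full derivative, handling the branch switch cleanly, and proving the residual elementary inequality holds uniformly on $[x_0,\infty)$ rather than merely checking it numerically at sample points.
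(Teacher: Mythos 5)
Your overall strategy --- reading all three claims directly off (\ref{E15})--(\ref{E16}) via the range of $\varphi$ and the branch structure of the Lambert function --- is the right one; the paper itself offers no proof beyond asserting that the corollary follows from ``a simple analysis of (\ref{E15})''. Your treatment of item 2 (for $x<x_0$ one has $\varphi(x)<-1/e$, so $W(k,\varphi(x))$ is not real and (\ref{E15}) yields no admissible $a$; this needs the monotonicity of $\varphi$, which the paper also invokes in Appendix \ref{app-3}) and your identification of the sign change of $1+W$ across the branch point as the mechanism behind item 1 are both sound, modulo the residual inequality in item 1 that you yourself flag as unverified.

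However, your argument for item 3 has a genuine error. First, the regime $x_1\to\infty$ with $a\to 0$ lives on the branch $k=-1$ (since $a\to 0$ forces $a\le a_0$ eventually), not on $k=0$ as you assume. Second, your asymptotics for $\varphi$ are wrong: the prefactor $\sin(\pi/x^2)/(\pi x^2)\sim 1/x^4$ is multiplied by the growing factor $-x^2+\ln(1+x^2)+x^2\ln(1+x^2)\sim x^2\ln(x^2)$, so $\varphi(x)\sim -\ln(x^2)/x^2$ rather than $O(1/x^4)$; consequently $x_1^2\,W(0,\varphi(x_1))\sim x_1^2\varphi(x_1)\approx -\ln(1+x_1^2)+1$, which does not tend to $0$. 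Third, your own bookkeeping is internally inconsistent: after the cancellation $-x_1^2+\pi\cot(\pi/x_1^2)\to 0$, the surviving elementary terms $\ln(x_1^2)+\ln(1+x_1^2)-1$ tend to $+\infty$, so on the $k=0$ branch the exponent behaves like $+\ln(x_1^2)\to+\infty$ and $a\to\infty$ there --- which is in fact consistent with item 1 ($a$ increasing on that branch), but contradicts the limit you are trying to prove. The correct route is the $k=-1$ branch, on which $W(-1,z)\sim\ln(-z)\to-\infty$ as $z\to 0^-$, so that $x_1^2\,W(-1,\varphi(x_1))\sim -x_1^2\ln(x_1^2)$ dominates every other term in the bracket and drives the exponent, hence $a$, to $0$.
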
\

Corollary \ref{C1} agrees with \cite{Abou-Faycal-01} where it was
shown using computer simulation that the non-zero mass point
location passes through a minimum before moving upward. However,
by specifying the edge point $(x_{0},a_{0})$, Corollary \ref{C1}
gives a more precise characterization concerning this peculiar
behavior of the non-zero mass point locations. Furthermore,
Corollary \ref{C1} also refines the lower bound on $x_{1}$, $x_{1}
> 1$ and derives $x_{0}$ as an improved lower
bound on the non-zero mass point location at low SNR. Moreover,
from (\ref{E15}), we may write:
\begin{equation}\label{E16prime}
    \ln{(a)}+x_{1}^2= x_{1}^{2}W\bigl(k,\varphi{(x_{1})}\bigr)+\pi
\cot{\bigl(\frac{\pi}{x_{1}^{2}}\bigr)}+\ln{(x_{1}^{2})}+\ln{(1+x_{1}^{2})}-
1.
\end{equation}
It is then easy to check that the right hand side (RHS) of
(\ref{E16prime}) is a decreasing function of $x_{1}$ for $x_{1} <
x_{0}$, which yields an upper bound on $x_{1}$:
\begin{equation}\label{E16second}
    x_{1}^2 \leq -\ln{(a)}+\xi_0,
\end{equation}
where $\xi_0=\ln{(a_0)}+x_{0}^2$, which is again consistent with
the upper bound derived in \cite{Zheng-2007}. Note that the upper
bound (\ref{E16second}) is valid for all $a \leq a_{0}$ whereas
the upper bound provided in \cite{Zheng-2007} holds for $a \ll
a_{0}$ for which $\xi_0$ is negligible. On the other hand,
combining (\ref{E16second}) and the lower bound on $x_{1}$
provided in Corollary \ref{C1} one may obtain:
\begin{equation}\label{E16ter}
  a^{\alpha}x_{0}^2  \leq a^{\alpha}x_{1}^2 \leq a^{\alpha}
  (\xi_0-\ln{(a)}).
\end{equation}
for all $\alpha > 0$. That is:
\begin{equation}\label{E16quat}
    \underset{a\rightarrow 0}{\lim}{\bigl(a^{\alpha}x_{1}^2\bigr)}=0,
\end{equation}
which means that $a^{\alpha}$ tends toward zero faster than
$x_{1}^2$ does toward infinity. This result may also be used to
gain further insight on the capacity behavior at low SNR. For
instance, from (\ref{E14}), we may write the non-coherent capacity
as:
\begin{equation}\label{E16fifth}
    C(a)=a+\textit{o}(a),
\end{equation}
where $\textit{o}(a)=-a \cdot
    \frac{\ln{(1+x_{1}^{2})}}{x_{1}^{2}}-a^{1+\frac{1}{x_{1}^{2}}} \cdot
\frac{\pi
\csc{\left(\frac{\pi}{x_{1}^{2}}\right)}\left(\frac{1}{x_{1}^{2}+x_{1}^{4}}\right)^{\frac{1}{x_{1}^{2}}}}{1+x_{1}^{2}}$,
meaning that the non-coherent capacity varies linearly with $a$ at
low SNR and hence non-coherent communication at low SNR may be
qualified as energy efficient communication.
\subsection{Energy efficiency and non-coherence penalty }
In general, the capacity of a channel including a Gaussian channel
and a Rayleigh channel varies linearly at low SNR
\cite{Zheng-2007}. The difference between these channels in terms
of capacity can only be explained by the sub-linear term
$\textit{o}(a)$ in (\ref{E16fifth}). The sub-linear term has been
defined in \cite{Zheng-2007} as:
\begin{equation}\label{E16-subdef}
    \Delta(a):=a-C(a).
\end{equation}
At low SNR, the sub-linear term $\Delta(a)$ is also related to the
energy-efficiency. let $E_{n}$ be the transmitted energy in Joules
per information nat, then we have:
\begin{equation}\label{E16-0}
    \frac{E_{n}}{\sigma_{w}^{2}} \cdot C(a)=a.
\end{equation}
Using (\ref{E16-subdef}), we can write:
\begin{equation}\label{E16-01}
    \frac{E_{n}}{\sigma_{w}^{2}} =\frac{1}{1-\frac{\Delta(a)}{a}}
    \approx 1+\frac{\Delta(a)}{a},
\end{equation}
where the approximation holds if $\frac{\Delta(a)}{a}$ is
sufficiently small. Note that if
\begin{equation}\label{E16-1}
    \frac{\Delta(a)}{a}\rightarrow 0,
\end{equation}
then from (\ref{E16-subdef}) and (\ref{E16-01}), we have
respectively:
\begin{eqnarray}
\label{E116_appC}
  C(a) &\approx& a \\
  \label{E116_appC00}
  \frac{E_{n}}{\sigma_{w}^{2}} &\approx& 1,
\end{eqnarray}
which implies that the highest energy efficiency of -1.59 (dB) per
information bit could be theoretically achieved. For a Gaussian
channel and a fading channel under the coherent assumption, the
sub-linear terms are respectively given by \cite{Zheng-2007}:
\begin{eqnarray}
\label{E116 appC1}
  \Delta_{AWGN}(a) &=& \frac{1}{2}a^2+o(a^2) \\
  \label{E116 appC2}
  \Delta_{coherent}(a) &=& \frac{1}{2}E[\|h\|^4]a^2+o(a^2)
\end{eqnarray}
For a non-coherent Rayleigh fading channel, the sub-linear term
can be computed using (\ref{E14}):
\begin{equation}\label{E16-2}
    \Delta(a)=a \cdot
    \frac{\ln{(1+x_{1}^{2})}}{x_{1}^{2}}+a^{1+\frac{1}{x_{1}^{2}}} \cdot
\frac{\pi
\csc{\left(\frac{\pi}{x_{1}^{2}}\right)}\left(\frac{1}{x_{1}^{2}+x_{1}^{4}}\right)^{\frac{1}{x_{1}^{2}}}}{1+x_{1}^{2}}.
\end{equation}
Note that at very low SNR and following (\ref{E16-2}),
$\frac{\Delta(a)}{a}$ converges to zero making the non-coherent
Rayleigh channel also energy efficient. However, as SNR increases,
the convergence of $\frac{\Delta(a)}{a}$ to zero is slower than
$\frac{\Delta_{AWGN}(a)}{a}$ and $\frac{\Delta_{coherent}(a)}{a}$.
This could be seen from (\ref{E16quat}) indicating that $x_{1}$
converges slower to infinity than $a$ does to zero. To illustrate
this, as an example, let us calculate the value of
$\frac{\Delta(a)}{a}$ for an SNR value $a=-30dB$. Following
(\ref{E16-2}), we can write:
\begin{equation}\label{E16-40}
    \frac{\Delta(a)}{a}=\frac{\ln{(1+x_{1}^{2})}}{x_{1}^{2}}+a^{\frac{1}{x_{1}^{2}}} \cdot
\frac{\pi
\csc{\left(\frac{\pi}{x_{1}^{2}}\right)}\left(\frac{1}{x_{1}^{2}+x_{1}^{4}}\right)^{\frac{1}{x_{1}^{2}}}}{1+x_{1}^{2}}.
\end{equation}
Solving (\ref{E15}) for $a=-30dB$ with respect to $x_{1}^2$
yields: $x_{1}^2 \approx 4.96815$. Then, substituting this value
in (\ref{E16-40}), we obtain $\frac{\Delta(a)}{a}\approx 49\%$.
Note that for AWGN and coherent Rayleigh fading channels,
$\frac{\Delta_{AWGN}(a)}{a}$ and $\frac{\Delta_{coherent}(a)}{a}$
are at the same order of magnitude than the SNR value in this
case. It takes a lower SNR for non-coherent communication to
achieve the same energy efficient as AWGN and coherent Rayleigh
fading channels.\

In the range of SNR values of interest, we may define the
non-coherence penalty per SNR as:
\begin{equation}\label{E16-3}
    \frac{C_{coherent}(a)-C(a)}{a}.
\end{equation}
where $C_{coherent}$ is the channel capacity under coherent
assumption. Now, from \cite{Zheng-2007}, we can write
$C_{coherent}$ as:
\begin{equation}\label{E16-10}
    C_{coherent}(a)=a+\textit{O}(a)=a+\textit{o}(a^{2-\alpha}),
\end{equation}
for any $1 > \alpha > 0$. Recalling that the non-coherent capacity
in (\ref{E14}) was obtained using series decomposition to an order
strictly smaller than 2, then combining (\ref{E14}) and
(\ref{E16-10}), we derive the exact non-coherence penalty per SNR
up to this order:
\begin{equation}\label{E16-4}
    \frac{C_{coherent}(a)-C(a)}{a}=
    \frac{C_{coherent}-C}{C_{coherent}}=
    \frac{\ln{(1+x_{1}^2)}}{x_{1}^2}+a^{\frac{1}{x_{1}^{2}}} \cdot
\frac{\pi
\csc{\left(\frac{\pi}{x_{1}^{2}}\right)}\left(\frac{1}{x_{1}^{2}+x_{1}^{4}}\right)^{\frac{1}{x_{1}^{2}}}}{1+x_{1}^{2}}
\end{equation}
Now using (\ref{E16quat}), dividing both sides of (\ref{E16-4}) by
$a^{\alpha}$, ($\alpha > 0$) and taking the limit as $a$ tends to
zero yields:
\begin{equation}\label{E16-6}
    C_{coherent}(a)-C(a) \gg a^{1+\alpha},
\end{equation}
where $\gg$ means:
\begin{equation}\label{E16-7}
    \underset{a \rightarrow
    0}{\lim}\frac{C_{coherent}(a)-C(a)}{a^{1+\alpha}}=\infty.
\end{equation}
Inequality (\ref{E16-6}) indicates that not only the non-coherent
capacity is much greater than $a^2$ as was established in
\cite{Verdu-02}, but more precisely, it is much greater than
$a^{1+\alpha}$ since $a^{1+\alpha} \gg a^{2}$, $1 > \alpha > 0$.
Again, this result is in full agreement with \cite{Zheng-2007}.\\

In this subsection, we have discussed exact closed forms of the
optimal input distribution and the non-coherent capacity based on
the fundamental relation (\ref{E13}) or equivalently (\ref{E15}).
However, one may be interested in deriving simpler lower and upper
bounds on these quantities in order to better understand how they
vary with the SNR value $a$. This is discussed next.\

\subsection{Upper and lower bounds on the non-coherent capacity  }
Considering (\ref{E15}), since we are interested in the low SNR
regime, we assume for simplicity that $a \leq a_{0}$. Thus the
Lambert function in (\ref{E15}) is the branch with $k=-1$, that is
$W(-1,x)$. A lower bound on the non-coherent capacity is easily
obtained by combining (\ref{E16second}) and (\ref{E14}) and will
be referred to as $C_{LB}(a)$. We now derive the lower bound on
the optimal non-zero mass point location and the upper bound on
the non-coherent capacity in Theorem \ref{T2}.\

\begin{theorem}\label{T2}
At low SNR values $a$, a lower bound on the optimal non-zero mass
point location is given by:
\begin{equation}\label{E17}
    x_{1,
    LB}=\frac{y}{\sqrt{-W\Biggl(-1,\varphi{\Bigl(\frac{y}{ -\ln{\bigl(
-\varphi{(y)}         \bigr)} }\Bigr)}\Biggr)}},
\end{equation}
where $y=\sqrt{1+\ln{\frac{1}{a}}}$. Furthermore, an upper bound
on the non-coherent capacity can be obtained from (\ref{E14}) as:
\begin{equation}\label{E18}
    C_{UB}(a)=C(a,x_{1, LB})
\end{equation}
\end{theorem}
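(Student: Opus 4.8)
The plan is to reduce both assertions to the single transcendental relation (\ref{E15}). Throughout, fix $a\le a_0$ so that, as noted just before the theorem, the relevant branch in (\ref{E15}) is $W(-1,\cdot)$, and recall from Corollary \ref{C1} that $x_1\ge x_0$ always and that on $[x_0,\infty)$ the map sending $x_1$ to the right-hand side of (\ref{E15}) is strictly decreasing, equal to $a$ precisely at the optimal mass-point location. Equivalently, the map $x_1\mapsto\mathcal E(x_1)$, where $\mathcal E(x)$ denotes the exponent in (\ref{E15}), is strictly decreasing on $[x_0,\infty)$ with $\mathcal E(x_1)=\ln a$ at the optimum. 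Consequently, once one checks that $x_{1,LB}\in[x_0,\infty)$ (which holds once $a$ is small, since $x_{1,LB}\to\infty$ as $a\to0$, at the same rate at which $x_1$ does), the inequality $x_{1,LB}\le x_1$ is equivalent to the single statement $\mathcal E(x_{1,LB})\ge\ln a$, i.e. that inserting $x_{1,LB}$ into the exponent of (\ref{E15}) returns a value no smaller than $\ln a$. This is the inequality I would actually verify.

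To reach the explicit form (\ref{E17}) I would bootstrap (\ref{E15}) once. Write $\mathcal E(x_1)=x_1^{2}\,W(-1,\varphi(x_1))+R(x_1)$ with $R(x_1)=-x_1^{2}+\pi\cot(\pi/x_1^{2})+\ln(x_1^{2})+\ln(1+x_1^{2})-1$, and observe that $R(x_1)\ge1$ for every $x_1\ge x_0$: writing $R$ as a function of $u=x_1^{2}$, its derivative is $\pi^{2}\csc^{2}(\pi/u)/u^{2}-1+1/u+1/(1+u)$, which is positive since $\sin(\pi/u)<\pi/u$ on $(1,\infty)$, so $R$ is increasing there while $R(x_0^{2})>1$ (numerically $\approx1.1$). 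Dividing the consequent inequality $x_1^{2}\,W(-1,\varphi(x_1))\le\ln a-1$ by the negative number $W(-1,\varphi(x_1))$ gives $x_1^{2}\ge y^{2}/\!\left(-W(-1,\varphi(x_1))\right)$ with $y^{2}=1+\ln(1/a)$. The right-hand side still contains $x_1$; to make it explicit I would invoke the a-priori bound (\ref{E16second}) for $x_1$, use that $\varphi$ is increasing on $[x_0,\infty)$ with $\varphi(x_0)=-1/e$ and that $W(-1,\cdot)$ is decreasing on $(-1/e,0)$ to control the argument of $W$, and re-substitute once; replacing $W(-1,\varphi(y))$ by the leading logarithm $W(-1,-t)\sim\ln t$ of its expansion near the origin turns the estimate into $z=y/\!\left(-\ln(-\varphi(y))\right)$ and hence into $x_{1,LB}=y/\sqrt{-W(-1,\varphi(z))}$.

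The capacity bound (\ref{E18}) then follows by inserting $x_{1,LB}$ into the closed form (\ref{E14}): over the pertinent range of mass-point locations the right-hand side of (\ref{E14}) is monotone in $x_1$, so the bound on $x_1$ transfers to a bound on $C$ — the same mechanism already used to obtain $C_{LB}(a)$ from (\ref{E16second}). The main obstacle is the bookkeeping in the bootstrapping step: one has to keep straight the opposite monotonicities of $\varphi$ (increasing) and $W(-1,\cdot)$ (decreasing), ensure that every argument fed to $W(-1,\cdot)$ stays in the admissible interval $[-1/e,0)$, and — the genuinely delicate point — verify that substituting the explicit $z$ for the implicit $x_1$ inside $\varphi$ only weakens the inequality $x_{1,LB}\le x_1$ rather than reversing it. Because $R(x_1)\ge1$ is nearly an equality at $x_1=x_0$, this last estimate must be handled with care rather than by crude bounding.
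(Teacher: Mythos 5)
Your overall skeleton matches the paper's Appendix C: you isolate the same remainder $R(x_1)=-x_1^{2}+\pi\cot(\pi/x_1^{2})+\ln(x_1^{2})+\ln(1+x_1^{2})-1$, prove $R\ge 1$ on $[x_0,\infty)$ (your derivative computation is in fact more explicit than the paper's ``it is easy to check''), and deduce the implicit bound $x_1^{2}\ge y^{2}/\bigl(-W(-1,\varphi(x_1))\bigr)$ with $y^{2}=1+\ln(1/a)$; the paper reaches the same point through an auxiliary location $x_{1,lb}$ solving $a=\exp\bigl[x_{1,lb}^{2}W(-1,\varphi(x_{1,lb}))+1\bigr]$ and notes $y>x_{1,lb}$ since $-W>1$. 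The genuine gap is in the de-implicitation step that produces the nested expression (\ref{E17}). You propose to ``replace $W(-1,\varphi(y))$ by the leading logarithm of its expansion near the origin,'' but an asymptotic equivalence gives no one-sided control, and you yourself flag that checking the direction of the resulting inequality is ``the genuinely delicate point''---and then leave it unresolved. That is exactly where the paper's proof has its real content: it invokes the infinite-ladder representation of $W(-1,\cdot)$ from \cite{Pickett} (Lemma \ref{L4}) to obtain the rigorous pointwise bound $W(-1,t)\le\ln(-t)-\ln\bigl(-\ln(-t)\bigr)$ on $(-1/e,0)$, and then runs the chain (\ref{EA48})--(\ref{EA50}), using $y>x_{1,lb}$, the monotonicity of $\varphi$, and $-\ln(-\varphi(\cdot))>1$, to conclude $W(-1,\varphi(x_{1,lb}))\le\ln(-\varphi(y))$, hence $z=y/\bigl(-\ln(-\varphi(y))\bigr)\ge x_{1,lb}$; one further application of $\varphi$ and $-W(-1,\cdot)$ yields (\ref{E17}) with the inequality pointing the right way, and (\ref{EA53}) shows this second iterate improves on the cruder $y/\sqrt{-W(-1,\varphi(y))}$. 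Without the ladder inequality (or an equivalent one-sided bound on $W(-1,\cdot)$), your bootstrap does not close and does not produce the specific formula the theorem asserts.

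Two smaller points. First, your opening reduction ($x_{1,LB}\le x_1$ iff $\mathcal{E}(x_{1,LB})\ge\ln a$) is sound but is then abandoned; you never evaluate $\mathcal{E}$ at the candidate, so it adds nothing. Second, for (\ref{E18}) you assert that the right-hand side of (\ref{E14}) is monotone in $x_1$ over the relevant range so that a lower bound on $x_1$ transfers to an upper bound on $C$. Be careful: (\ref{E14}) coincides with the truncated mutual information (\ref{EA35}), which is \emph{maximized} at the optimal $x_1$, so it cannot be monotone through the optimum, and a naive evaluation at $x_{1,LB}<x_1$ would by that reasoning give a lower bound instead. The paper itself offers no more justification than ``apply (\ref{E14}) to $x_{1,LB}$,'' so this weakness is inherited rather than introduced by you, but the monotonicity claim as you state it is not consistent with the characterization of the optimum in Lemma \ref{L1}.
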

\begin{proof}
For convenience, the proof is presented  in Appendix \ref{app-3}.
\end{proof}\

\section{Numerical Results and Discussion}\label{result}
The curves in Fig. \ref{Fig2} show respectively, the non-zero mass
point location of the capacity-achieving input distribution
$x_{1}$ obtained using maximization (\ref{E10}), and the one
obtained using relation (\ref{E13}) or equivalently (\ref{E15}).
As can be seen from Fig. \ref{Fig2}, the two curves are
undistinguishable at low SNR, confirming that (\ref{E16}) is exact
at low SNR. As the SNR increases, a small discrepancy between the
two curves starts to appear. This is expected since (\ref{E15})
holds for up to an order of magnitude strictly smaller than 2 and
thus for small SNR values, (but not smaller than about
$2.10^{-2}$), a discrepancy may appear. Nevertheless, even for an
SNR greater than $2.10^{-2}$, the curve obtained using (\ref{E15})
is very instructive especially as it follows the same shape as the
one obtained by simulation results. An interesting future work
would be to use (\ref{E16}) in order to understand why a new mass
point should appear as the SNR increases. It should be mentioned
that the discrepancy observed in Fig. \ref{Fig2} may be rendered
as small as desired using high order series expansion. However,
the analysis would be unrewardingly too complex.\

Figure \ref{Fig3} depicts the non-coherent capacity curves. Again,
the curve obtained by computer simulation and the one obtained
using (\ref{E14}) are undistinguishable. More interestingly, the
discrepancy observed at not very low SNR values in Fig. \ref{Fig2}
has vanished, implying that the capacity is not very sensitive to
the non-zero mass point location. Also shown in Fig. \ref{Fig3} is
the linear approximation $C(a)=a$, which is an upper bound on the
capacity. As can be noticed in Fig. \ref{Fig3}, the linear
approximation follows the same shape as the exact non-coherent
capacity curves at low SNR and becomes quite loose for SNR values
greater than $10^{-2}$. This implies that the sub-linear term
defined in (\ref{E16-subdef}) is much more important at these SNR
values. This can be seen in Fig. \ref{Fig4} where we have plotted
the non-coherence penalty percentage given by (\ref{E16-4}).
Figure \ref{Fig4} confirms that there is no substantial gain in
the channel knowledge in a capacity sense at very low SNR, thus
indicating that non-coherent communication is almost as
power-efficient as AWGN and coherent communications. As the SNR
increases, a non-coherence penalty begins to appear reaching up to
$70\%$.\

The derived upper and lower bounds on the non zero mass point
locations given respectively by (\ref{E16second}) and (\ref{E17})
as well as well as the bounds derived in \cite{Zheng-2007} are
plotted in Fig. \ref{Fig5} along with the exact curves at low SNR.
As can be seen in Fig. \ref{Fig5}, the upper bound in
\cite{Zheng-2007}, albeit tighter than (\ref{E16second}), crosses
the exact curves at about $2.10^{-2}$. At these not so low SNR
values, the derived bound in \cite{Zheng-2007} is no longer an
upper bound, consistently with our discussion in Subsection
\ref{subsectionA}. On the other hand, the lower bound (\ref{E17})
is tighter than the one derived in \cite{Zheng-2007} for all SNR
values.

\section{Conclusion}\label{conclusion}
In this paper, we have addressed the analysis of the capacity of
discrete-time non-coherent memoryless Rayleigh fading channels at
low SNR. We have computed explicitly the channel mutual
information at low SNR which is also a lower bound on the channel
mutual information, albeit not necessarily at low SNR values.\

Using the derived expression of the channel mutual information, we
have been able to provide a fundamental relation between the
non-zero mass point location of the capacity-achieving input
distribution and the SNR. This fundamental relation brings the
complete answer about how the optimal input distribution varies
with the power constraint at low SNR. It also provides an
analytical explanation on what was previously observed through
computer simulation in \cite{Abou-Faycal-01} about the peculiar
behavior of the non-zero mass point location at low SNR values.
The exact non-coherent capacity has been derived and insights on
the capacity behavior which can be gained through functional
analysis has been shown.\

In order to better understand how the non-zero mass point location
varies with the SNR, we have also derived lower and upper bounds
which have been compared to recently derived bounds. The newly
derived lower bound is tighter for all SNR values of interest,
whereas somewhat looser, the upper bound was shown to hold for
larger SNR values.

\appendices
\section{Proof of lemma \ref{L1}}\label{app-1}
For convenience, we will use $f(x)$ instead of $f_{x}(x)$ to
denote the probability density function of the random variable $x$
at the value $x$. We first prove that $I_{LB}(x;y)$ is a strictly
monotonically increasing function with respect to
$x_{1}$.\footnote{Note that the technic used here to prove that
$I_{LB}(x;y)$ is strictly monotonically increasing function with
respect to $x_{1}$ follows along the same lines as the technic
used to establish that the optimal input distribution has
necessarily a mass point at zero in \cite{Abou-Faycal-01}, albeit
the two technics have strictly different objectives}
Differentiating (\ref{E9}) with respect to $x_{1}$ yields
\renewcommand{\theequation}{\thesection.\arabic{equation}}
\begin{equation}\label{EA13}
    \frac{\partial}{\partial{x_{1}}}I_{LB}(x_{1},p_{1})=
p_{1}\int_{0}^{\infty}{\frac{\partial}{\partial{x_{1}}}}f(y|x_{1})\ln{\left(
\frac{f(y|x_{1})}{f(y)}\right)}dy
\end{equation}
Differentiating (\ref{E4}), we obtain:
\begin{equation}\label{EA14}
\frac{\partial}{\partial{x_{1}}}f(y|x_{1})=\frac{2x_{1}}{(1+x_{1}^{2})^2}\left[y-(1+x_{1}^{2})\right]f(y|x)
\end{equation}
Substituting (\ref{EA14}) in (\ref{EA13}) yields:
\begin{equation}\label{EA15}
    \frac{\partial}{\partial{x_{1}}}I_{LB}(x_{1},p_{1})=
    \frac{2p_{1} x_{1}}{(1+x_{1}^{2})^{2}}\int_{0}^{\infty}
\left[y-(1+x_{1}^{2})\right]f(y|x_{1})
    \ln{\left(\frac{f(y|x_{1})}{f(y)}\right)}dy
\end{equation}
Let $g(y)$ be defined as
$g(y)=\ln{\left(\frac{f(y|x_{1})}{f(y)}\right)}$. Now, we need the
following lemma.
\begin{Lemma}\label{L2}
Let $f(y)$ be a probability density function with mean $m$. If
$g(y)$ is a strictly monotonically increasing function then
\begin{equation}\label{EA16}
    \int{(y-m)f(y)g(y)} > 0
\end{equation}
\end{Lemma}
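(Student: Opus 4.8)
The plan is to prove Lemma~\ref{L2} by splitting the integration domain at the mean $m$ and exploiting the monotonicity of $g$ together with the defining property $\int(y-m)f(y)\,dy=0$. First I would write
\begin{equation}\label{EA16a}
\int (y-m)f(y)g(y)\,dy = \int (y-m)f(y)\bigl(g(y)-g(m)\bigr)\,dy,
\end{equation}
which is legitimate precisely because $\int(y-m)f(y)g(m)\,dy = g(m)\int(y-m)f(y)\,dy = 0$. The point of subtracting the constant $g(m)$ is that it makes the integrand sign-definite: for $y>m$ we have $y-m>0$ and, since $g$ is strictly increasing, $g(y)-g(m)>0$; for $y<m$ we have $y-m<0$ and $g(y)-g(m)<0$; in both cases the product $(y-m)\bigl(g(y)-g(m)\bigr)$ is strictly positive (and it is zero only at the single point $y=m$).

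Next I would observe that $f(y)\ge 0$, so the integrand $(y-m)f(y)\bigl(g(y)-g(m)\bigr)$ is nonnegative everywhere and is strictly positive on the set where $f(y)>0$ and $y\ne m$. Since $f$ is a genuine probability density (it cannot be concentrated entirely at the single point $y=m$, as it has a continuous part in our setting), this set has positive measure, and therefore the integral in \eqref{EA16a} is strictly positive. Combining with \eqref{EA16a} gives $\int(y-m)f(y)g(y)\,dy>0$, which is exactly \eqref{EA16}. One should note for rigour that the argument implicitly assumes the relevant integrals converge; in the application $f(y)=\sum_j p_j f(y\mid x_j)$ is a mixture of exponential densities with finite mean, and $g(y)=\ln\bigl(f(y\mid x_1)/f(y)\bigr)$ grows at most linearly in $y$, so $(y-m)f(y)g(y)$ is integrable.

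The main obstacle — really the only subtle point — is justifying that the set $\{y : f(y)>0,\ y\ne m\}$ has positive Lebesgue measure, i.e.\ that the inequality is \emph{strict} rather than merely $\ge 0$. This rules out the degenerate possibility that all the mass sits at $y=m$; here it is immediate because the output density is a continuous (absolutely continuous) distribution on $[0,\infty)$, so no single point carries positive probability and the strict inequality follows. Everything else is the elementary sign bookkeeping above, and no heavy machinery is needed. Once Lemma~\ref{L2} is in hand, it applies with $f(y)$ the output density (mean $m=1+x_1^2$, matching the bracket $\bigl[y-(1+x_1^2)\bigr]$ in \eqref{EA15}) and $g(y)=\ln\bigl(f(y\mid x_1)/f(y)\bigr)$, which is increasing because $f(y\mid x_1)$ is the exponential with the largest mean among the mixture components; this forces $\frac{\partial}{\partial x_1}I_{LB}(x_1,p_1)>0$, completing the monotonicity step of Lemma~\ref{L1}.
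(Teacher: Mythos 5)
Your proof of the lemma itself is correct and follows precisely the standard route (subtract the constant $g(m)$, which is free because $\int(y-m)f(y)\,dy=0$, then observe that $(y-m)\bigl(g(y)-g(m)\bigr)\geq 0$ with strict inequality off $y=m$ wherever $f>0$); the paper offers no proof of its own but simply defers to Lemma 1 of \cite{Abou-Faycal-01}, whose argument is exactly this computation, so you are on the same path and your write-up is if anything more complete. One small correction to your closing remark about how the lemma is invoked: in \eqref{EA15} the density multiplying the bracket $\bigl[y-(1+x_{1}^{2})\bigr]$ is the \emph{conditional} density $f(y|x_{1})$, whose mean is indeed $1+x_{1}^{2}$, and not the output mixture density $f(y)$ (whose mean is $1+p_{1}x_{1}^{2}$); the lemma must therefore be applied with $f(y|x_{1})$ playing the role of the density and $g(y)=\ln\bigl(f(y|x_{1})/f(y)\bigr)$ the increasing function, exactly as the sign of the exponent in \eqref{EA17} guarantees.
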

\begin{proof}
The proof follows along similar lines as Lemma 1 in
\cite{Abou-Faycal-01}.
\end{proof}
To apply Lemma \ref{L2}, it is sufficient to note that
\begin{equation}\label{EA17}
\frac{f(y)}{f(y|x_{1})}=p_{1}+p_{0}(1+x_{1}^2)\exp{\left[y\left(\frac{1
}{1+x_{1}^2}-1)\right)\right]}
\end{equation}
is strictly decreasing with respect to $y$ because the exponent of
the exponential function is negative, therefore
$\frac{f(y|x_{1})}{f(y)}$ is strictly increasing and so is $g(y)$.
Finally, using the fact that $(1+x_{1}^2)$ is the mean of
$f(y|x_{1})$ and applying Lemma \ref{L2} to (\ref{EA15}), we
obtain:
\begin{equation}\label{EA18}
    \frac{\partial}{\partial{x_{1}}}I_{LB}(x_{1},p_{1}) > 0,
\end{equation}
which means that $I_{LB}(x_{1},p_{1})$ is strictly increasing with
respect to $x_{1}$. Consequently, the average power constraint
holds with equality. That is $E[x^2]=p_{1}x_{1}^2=a$. Hence
(\ref{E8}) is equivalent to:
\begin{equation}\label{EA19}
     \begin{cases}
 C_{LB}=\underset{x_{1} \geq \sqrt{a}}{\max}{I_{LB}(x_{1},p_{1})}\\
p_{1}x_{1}^2=a.
\end{cases}
\end{equation}
Next, we prove the existence of the maximum in (\ref{EA19}).
Clearly, $I_{LB}(x_{1},p_{1})$ is now a function of $x_{1}$ and
$a$ since $p_{1}x_{1}^2=a$. That $x_{1} \geq \sqrt{a}$ follows
automatically from the fact that $p_{1}\leq1$. On the other hand,
$I_{LB}(x_{1},p_{1})$ in (\ref{E9}) is positive-definite and
continue with respect to $x_{1}$ and $p_{1}$ and thus so is
$I_{LB}(x_{1},a)$ for a given SNR value $a$. Moreover
$I_{LB}(x_{1},a)$ is upper-bounded over the interval
$[\sqrt{a},\infty[$ otherwise, one would have, for some SNR value,
say $a^{0}$ :
\begin{equation}\label{EA20}
    \forall \ \epsilon \ > 0, \quad \exists \ x_{1}^{0} \ > \ \sqrt{a^{0}}
\quad |
    \quad
    I_{LB}(x_{1}^{0},a^{0}) \ > \ \epsilon.
\end{equation}
But this statement also means that the channel mutual
information-an upper bound on $I_{LB}(x_{1},a^{0})$- is unbounded
for $a^{0}$ which contradicts the fact that the capacity exists
for all SNR values as proven in \cite{Abou-Faycal-01}. Hence,
$I_{LB}(x_{1},a)$ is necessarily upper-bounded. Furthermore, the
continuity of $I_{LB}(x_{1},a)$ over $[\sqrt{a},\infty[$ implies
that the upper-bound is either achieved at a finite value $x_{1}$
or at $\infty$. The last case is however impossible. To see this,
it is sufficient to observe that for a given $a$, as $x_{1}$ goes
to infinity, $p_{1}$ tends toward zero. Thus following (\ref{E9}),
$\underset{x_{1}\rightarrow\infty}{\lim}I_{LB}(x_{1},a)=I_{LB}(\infty,0)=0$,
and consequently $I_{LB}(x_{1},a)=0$ for all $x_{1} \in
[\sqrt{a},\infty[$ which is impossible since the discrete input
distribution $x$ and the output $y$ are dependent. That is, the
upper bound is achieved at a finite value $x_{1}$ and this proves
the existence of the maximum in (\ref{EA19}). Moreover, since the
maximum is not at the borders of $[\sqrt{a},\infty[$, we
necessarily have at the maximum $\frac{\partial}{\partial
x_{1}}I_{LB}(x_{1},a)=0$.

Finally, in order to prove (\ref{E11}), we directly compute the
lower bound $I_{LB}(x_{1},p_{1})$ from (\ref{E9}):
\begin{eqnarray}\label{EA21}
I_{LB}(x_{1},p_{1})&=&
\underbrace{p_{0}\int_{0}^{\infty}f(y|0)\ln{(f(y|0))}dy}_{I_{1}}-\underbrace
{
p_{0}\int_{0}^{\infty}f(y|0)\ln{(f(y))}}_{I_{2}} \nonumber \\
           & & +
\underbrace{p_{1}\int_{0}^{\infty}f(y|x_{1})\ln{(f(y|x_{1}))}}_{I_{3}}-
\underbrace{p_{1}\int_{0}^{\infty}f(y|x_{1})\ln{(f(y))}}_{I_{4}}
\end{eqnarray}
$I_{1}$ and $I_{3}$ may be easily computed:
\begin{equation}\label{EA22}
    I_{1}=p_{0}\int_{0}^{\infty}e^{-y}\ln{(e^{-y})}dy=-p_{0}=1-p_{1}
\end{equation}
\begin{eqnarray}\label{EA23}
I_{3}&=&p_{1}\int_{0}^{\infty}\frac{1}{1+x_{1}^2}e^{-\frac{y}{1+x_{1}^2}}\ln
{\left(\frac{1}{1+x_{1}^2}e^{-\frac{y}{1+x_{1}^2}}\right)}dy \nonumber \\
    &=&-p_{1}\left(1+\ln{(1+x_{1}^2)}\right)
\end{eqnarray}
\begin{eqnarray}\label{EA24}
I_{2}&=&p_{0}\int_{0}^{\infty}e^{-y}\ln{\left(p_{0}e^{-y}+\frac{p_{1}}{1+x_{
1}^2}e^{-\frac{y}{1+x_{1}^2}}\right)}dy \nonumber \\
&=&\underbrace{\int_{0}^{\infty}p_{0}e^{-y}\ln{\left(p_{0}e^{-y}\right)}dy}_
{I_{21}}+\underbrace{\int_{0}^{\infty}p_{0}e^{-y}\ln{\left(1+\frac{p_{1}}{p_
{0}(1+x_{1}^2)}e^{\left(1-\frac{1}{1+x_{1}^2}\right)y}\right)}dy}_{I_{22}}
\end{eqnarray}
$I_{21}$ can be easily computed:
\begin{equation}\label{EA25}
    I_{21}=p_{0}\left[\ln{(p_{0})}-1\right]
\end{equation}
In order to compute $I_{22}$, let $\alpha=1+x_{1}^2$ and
$\beta=\frac{p_{1}}{p_{0}\alpha}=\frac{p_{1}}{(1-p_{1})\alpha}$.
Thus, $I_{22}$ may be written:
\begin{eqnarray}\label{EA26}
I_{22}&=&
\frac{p_{0}\alpha}{\alpha-1}\int_{1}^{\infty}t^{\frac{1-2\alpha}{\alpha-1}\ln{(1+\beta t)}}dt \nonumber\\
      &=&
\frac{p_{0}\alpha}{\alpha-1}\left\{\left[\frac{1-\alpha}{\alpha}t^{-\frac{\alpha}{\alpha-1}}\ln{(1+\beta
t)}\right]_{1}^{\infty}-\frac{1-\alpha}{\alpha}\beta
\int_{1}^{\infty}\frac{t^{-\frac{\alpha}{\alpha-1}}}{1+\beta
t}dt\right\}
\end{eqnarray}
The integral on the RHS of (\ref{EA26}) may be computed as
\cite{Ryzhik}:
\begin{equation}\label{EA27}
    \int_{1}^{\infty}\frac{t^{-\frac{\alpha}{\alpha-1}}}{1+\beta
    t}dt=\frac{\alpha-1}{\alpha \beta}\cdot
{}_{2}F_{1}\left(1,1+\frac{1}{\alpha -1},2+\frac{1}{\alpha
-1},-\frac{1}{\beta} \right)
\end{equation}
Substituting (\ref{EA27}) in (\ref{EA26}), we obtain:
\begin{equation}\label{EA28}
    I_{22}=p_{0}\left[\ln{(1+\beta)}+\frac{\alpha
-1}{\alpha}{}_{2}F_{1}\left(1,1+\frac{1}{\alpha
-1},2+\frac{1}{\alpha -1},-\frac{1}{\beta}
    \right)\right],
\end{equation}
and thus combining (\ref{EA24}), (\ref{EA25}) and (\ref{EA28}),
yields:
\begin{equation}\label{EA29}
    I_{2}= p_{0}\left[\ln{(p_{0})}-1\right]+
p_{0}\left[\ln{(1+\beta)}+\frac{\alpha
-1}{\alpha}\cdot{}_{2}F_{1}\left(1,1+\frac{1}{\alpha
-1},2+\frac{1}{\alpha -1},-\frac{1}{\beta}
    \right)\right].
\end{equation}
The integral $I_{4}$ may be computed similarly. We skip the
details and give below the final result:
\begin{equation}\label{EA30}
    I_{4}= p_{1}\ln{(p_{0})}-p_{1}\alpha+ p_{1}\left[\ln{(1+\beta)}+(\alpha
-1)\cdot{}_{2}F_{1}\left(1,\frac{1}{\alpha -1},1+\frac{1}{\alpha
-1},-\frac{1}{\beta}
    \right)\right].
\end{equation}
Following (\ref{EA21}), (\ref{EA22}), (\ref{EA23}), (\ref{EA29}),
(\ref{EA30}) and using the fact that:
\begin{equation}\label{EA31}
    {}_{2}F_{1}\left(1,\frac{1}{\alpha -1},1+\frac{1}{\alpha
-1},-\frac{1}{\beta}
    \right)+\frac{1-p_{1}}{p_{1}}\cdot {}_{2}F_{1}\left(1,1+\frac{1}{\alpha
-1},2+\frac{1}{\alpha
    -1},-\frac{1}{\beta}
    \right)=1,
\end{equation}
we obtain:
\begin{eqnarray}\label{EA32}
I_{LB}(x_{1},p_{1})&=&-\ln{(1-p_{1})}+p_{1}\left(x_{1}^{2}-\ln{(1+x_{1}^{2})
}\right)-\ln{(1+\beta)} \nonumber \\
    & & -\frac{p_{1}(\alpha -1)}{\alpha}\left[(\alpha -1)\cdot
{}_{2}F_{1}\left(1,\frac{1}{\alpha -1},1+\frac{1}{\alpha
-1},-\frac{1}{\beta}
    \right) + 1 \right].
\end{eqnarray}
Combining (\ref{EA32}) and (\ref{EA19}) yields (\ref{E11}) which
completes the proof of Lemma \ref{L1}.
\section{Proof of theorem  \ref{T1}}\label{app-2}
At low SNR, a discrete input distribution with two mass points,
one of them located at zero, achieves the non-coherent capacity
\cite{Abou-Faycal-01}.  That $p_{1}=a/x_{1}^{2}$ was proven in
Appendix \ref{app-1}. Therefore, (\ref{E12}) is true. To derive
(\ref{E13}), it is a matter of series expansion calculus.\

Before proceeding, it should be reminded that for the optimal
input distribution given in Theorem \ref{T1}, the non-zero mass
point location $x_{1}$ is greater than 1 $(x_{1} > 1)$
\cite{Abou-Faycal-01,Zheng-2007}. Then, series expansion of
(\ref{E11})  to the second order, around the point
$(x_{1},a)=(x_{1},0)$, where $x_{1}$ is an arbitrary real greater
than one, can be obtained using Mathematica:
\begin{eqnarray}\label{EA33}
I_{LB}(x_{1},a)&=&\biggl((1-\frac{\log(1+x_{1}^2)}{x_{1}^2})a+\frac{1}{2(x_{
1}^{2}-1)}a^{2}+\textit{o}{(a^{2})}\biggr) \nonumber \\
     & & -a^{\frac{1}{x_{1}^2}} \biggl(\pi
x_{1}^2\biggl(x_{1}^2(1+x_{1}^2)\biggr)^{-\frac{1+x_{1}^2}{x_{1}^2}}\csc{\biggl(\frac{\pi}{x_{1}^2}\biggr)} a \nonumber \\
    & & + \frac{\pi
\biggl(x_{1}^2(1+x_{1}^2)\biggr)^{-\frac{1+x_{1}^2}{x_{1}^2}}\csc{\biggl(\frac{\pi}{x_{1}^2}}\biggr)}{x_{1}^2}
    a^{2}+\textit{o}{(a^{2})}\biggr),
\end{eqnarray}
where the symbol $\circ{(a^{n})}$ represents a function say
$g(x_{1},a)$, such that $\underset{a \rightarrow
0}{\lim}\frac{g(x_{1},a)}{a^n}=0$. Since $x_{1} > 1$, then there
exists $\epsilon
> 0$ such that $1+\frac{1}{x_{1}^2} < 2-\epsilon$. Thus,
(\ref{EA13}) may be written as:
\begin{equation}\label{EA34}
    I_{LB}(x_{1},a)=\biggl(1-\frac{\log(1+x_{1}^2)}{x_{1}^2}\biggr)a
    - \pi
x_{1}^2\biggl(x_{1}^2(1+x_{1}^2)\biggr)^{-\frac{1+x_{1}^2}{x_{1}^2}}\csc{\biggl(\frac{\pi}{x_{1}^2}\biggr)}a^{1+\frac{1}{x_{1}^2}}+\textit{o}{(a^{2-\epsilon})},
\end{equation}
which represents series expansion to an order strictly less than
2. Up to this order, we may make some abuse of notation, drop the
term $\textit{o}{(a^{2-\epsilon})}$ and write (\ref{EA34}) as:
\begin{equation}\label{EA35}
    I_{LB}(x_{1},a)=\biggl(1-\frac{\log(1+x_{1}^2)}{x_{1}^2}\biggr)a
    - \pi
x_{1}^2\biggl(x_{1}^2(1+x_{1}^2)\biggr)^{-\frac{1+x_{1}^2}{x_{1}^2}}\csc{\biggl(\frac{\pi}{x_{1}^2}\biggr)}a^{1+\frac{1}{x_{1}^2}}.
\end{equation}
Maximizing (\ref{EA15}) with respect to $x_{1}>1$ is equivalent
to:
\begin{equation}\label{EA36}
    \min_{x_{1}>1}\Biggl[ \frac{\log(1+x_{1}^2)}{x_{1}^2} + \pi
x_{1}^2\biggl(x_{1}^2(1+x_{1}^2)\biggr)^{-\frac{1+x_{1}^2}{x_{1}^2}}\csc{\biggl(\frac{\pi}{x_{1}^2}\biggr)}a^{1+\frac{1}{x_{1}^2}}
    \Biggr].
\end{equation}
As was proven in Appendix \ref{app-1}, at the maximum, we have
necessarily $\frac{\partial}{\partial x_{1}}I_{LB}(x_{1},a)=0$.
Differentiating (\ref{EA36}) with respect to $x_{1}$ yields
(\ref{E13}). Finally, (\ref{E14}) follows from (\ref{EA35}). This
completes the proof of Theorem \ref{T1}.
\section{Proof of theorem  \ref{T2}}\label{app-3}
For $a<a_{0}$ and $x_{1} > x_{0}$, (\ref{E15}) may be written as:
\begin{equation}\label{EA37}
    a(x_{1})=
\exp{\bigl[x_{1}^{2}W\bigl(-1,\varphi{(x_{1})}\bigr)-x_{1}^{2}+\pi
\cot{\bigl(\frac{\pi}{x_{1}^{2}}\bigr)}+\ln{(x_{1}^{2})}+\ln{(1+x_{1}^{2})}-
1
    \bigr]}.
\end{equation}
Moreover, it is easy to check that $a$ in (\ref{EA37}) is a
decreasing function with respect to $x_{1}$ and that:
\begin{equation}\label{EA38}
-x_{1}^{2}+\pi
\cot{\bigl(\frac{\pi}{x_{1}^{2}}\bigr)}+\ln{(x_{1}^{2})}+\ln{(1+x_{1}^{2})}-
1
> 1,
\end{equation}
for $x_{1} > x_{0}$. Thus, using (\ref{EA37}) and (\ref{EA38}), we
have:
\begin{equation}\label{EA39}
    a(x_{1}) >
a_{lb}(x_{1})=\exp{\bigl[x_{1}^{2}W\bigl(-1,\varphi{(x_{1})}\bigr)+1\bigr]},
\end{equation}
where $a_{lb}(x_{1})$ is a lower bound on $a(x_{1})$. Since
$a_{lb}(x_{1})$ is also a decreasing function with respect to
$x_{1}$, then for a low SNR value $a$, (\ref{EA39}) may be seen as
a lower bound on the optimal non-zero mass point location $x_{1}$
and we equivalently have:
\begin{equation}\label{EA40}
    x_{1} > x_{1,lb},
\end{equation}
where $x_{1,lb}$ is the solution of $a_{lb}(x_{1})=a$. Next, we
derive a lower bound on $x_{1,lb}$.\

Let us fixe a low SNR value $a < a_{0}$ and consider the function
on the RHS of (\ref{EA39}) written for simplicity as:
\begin{equation}\label{EA41}
    a=\exp{\bigl[x_{1,lb}^{2}W\bigl(-1,\varphi{(x_{1,lb})}\bigr)+1\bigr]},
\end{equation}
or equivalently by letting
$y=\sqrt{1+\ln{\bigl(\frac{1}{a}\bigr)}}$:
\begin{equation}\label{EA42}
    x_{1,lb}^2=\frac{y^2}{-W\bigl(-1,\varphi{(x_{1,lb})}\bigr)}.
\end{equation}
Since $-W\bigl(-1,\varphi{(x_{1,lb})}\bigr) > 1$ for
$x_{1,lb}>x_{0}$, it is easy to see that $y^2 > x_{1,lb}^2$.
Hence, using the fact that $\varphi{(\cdot)}$ and
$-W\bigl(-1,\cdot\bigr)$ are strictly increasing functions, we
have:
\begin{equation}\label{EA43}
    x_{1,LB}^{(1)}=\frac{y}{\sqrt{-W\bigl(-1,\varphi{(y)}\bigr)}} <
    x_{1,lb}=\frac{y}{\sqrt{-W\bigl(-1,\varphi{(x_{1,lb})}\bigr)}},
\end{equation}
where the superscript $(^{1})$ on the left hand side of
(\ref{EA43}) means a first lower bound. Next we improve the lower
bound $x_{1,LB}^{(1)}$ to obtain a tighter one. But before going
on, we remind this result from \cite{Pickett} which aims at
resolving transcendental equations involving Lambert function
iteratively using self-mapping techniques:
\begin{Lemma}\label{L4}
For the region specified by $x < 1$ and $-\frac{1}{e} < y < 0$, an
infinite-ladder solution to the equation:
\begin{equation}\label{EA44}
    y(x)=x e^{x}
\end{equation}
is easily identified as
\begin{equation}\label{EA45}
    x(y)=L_{<}(y),
\end{equation}
with the ladder $L_{<}(y)$ defined as
\begin{equation}\label{EA46}
    L_{<}(y)=-\ln{\Biggl( \frac{\ln{\frac{\ln{(\ldots)}}{-y}}}{-y}
    \Biggr)}.
\end{equation}
\end{Lemma}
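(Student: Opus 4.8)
The plan is to read the ladder in (\ref{EA46}) as the Picard iteration of a fixed-point reformulation of (\ref{EA44}), and then to certify convergence of that iteration by a contraction argument. Since $-\frac{1}{e}<y<0$ forces the root of interest to be negative, I would first substitute $v=-x>0$, which turns $y=xe^{x}$ into $-y=ve^{-v}$, equivalently $v=\ln\bigl(v/(-y)\bigr)$. Substituting the right-hand side into its own occurrence of $v$ repeatedly produces the nested expression $v=\ln\Bigl(\frac{\ln\bigl(\cdots/(-y)\bigr)}{-y}\Bigr)$, whose finite truncations are exactly $(-1)$ times the finite truncations of the ladder $L_{<}(y)$ in (\ref{EA46}). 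Consequently, once convergence is established, the limit $v^{\star}$ is a fixed point of $g(v):=\ln v-\ln(-y)$, and $x^{\star}:=-v^{\star}$ satisfies $y=x^{\star}e^{x^{\star}}$, which is the assertion of the lemma.

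Next I would prove that the iteration $v_{n+1}=g(v_{n})$ converges. The decisive remark is that $0<-y<\frac{1}{e}$ makes $-\ln(-y)>1$, so for every $v\ge 1$ one has $g(v)=\ln v-\ln(-y)>1$; thus $g$ maps $[1,\infty)$ into itself, and after the first step the iterates lie in $[g(1),\infty)$ with $g(1)>1$. On that set $g'(v)=1/v\le 1/g(1)<1$, so $g$ is a uniform contraction and the Banach fixed-point theorem yields a unique fixed point $v^{\star}>1$ together with convergence $v_{n}\to v^{\star}$ from any starting point in $[1,\infty)$ — in particular from the seed implicit in the ladder. Translating back, $x^{\star}=-v^{\star}<-1$, so in particular $x^{\star}<1$ as in the hypothesis, and $x^{\star}$ is the root of $y=xe^{x}$ on the branch with argument below $-1$, i.e. $x^{\star}=W(-1,y)$, which is precisely the branch used in (\ref{EA37}).

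I expect the main obstacle to be not the contraction estimate but the bookkeeping that identifies the nested expression in (\ref{EA46}) with the Picard iterates: one must check that every admissible way of truncating the ``$\cdots$'' corresponds to starting the $g$-iteration at a point of the invariant half-line $[1,\infty)$, so that the limit of the ladder is well defined and independent of the truncation. This is a direct consequence of the contraction property once it is in place, but it is the step that deserves to be written out rather than waved through. A secondary, one-line point is to confirm that the stated region $x<1$ is compatible with the fixed point actually found, which the monotonicity of $g$ on $[1,\infty)$ — forcing $v^{\star}>1$, hence $x^{\star}<-1$ — makes immediate.
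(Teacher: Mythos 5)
Your proposal is correct, but it is worth noting that the paper does not actually prove this lemma at all: its ``proof'' is a pointer to Pickett and Millev \cite{Pickett}, where the ladder $L_{<}$ is obtained by the formal device of infinite self-mappings (substituting the inversion formula into itself indefinitely). What you supply is a genuine, self-contained justification of that formal device: you recast $y=xe^{x}$ with $v=-x$ as the fixed-point equation $v=g(v)$, $g(v)=\ln v-\ln(-y)$, observe that $0<-y<\tfrac{1}{e}$ gives $g(1)=-\ln(-y)>1$ so that $[1,\infty)$ is invariant and the iterates enter the closed invariant set $[g(1),\infty)$ on which $|g'|\le 1/g(1)<1$, and then invoke the Banach fixed-point theorem. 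This buys two things the citation does not: (i) a proof that the nested expression in (\ref{EA46}) actually converges and that its value is independent of how the innermost ``$\ldots$'' is seeded, provided the seed lies in the invariant half-line; and (ii) an identification of \emph{which} root the ladder converges to, namely $v^{\star}>1$, i.e.\ $x^{\star}=-v^{\star}<-1$, which is exactly the branch $W(-1,y)$ used in (\ref{EA37})--(\ref{EA47}). That second point quietly repairs an imprecision in the statement itself: as written, the region $x<1$ contains both real branches $W(0,y)\in(-1,0)$ and $W(-1,y)<-1$ when $-\tfrac{1}{e}<y<0$, so ``the'' solution is ambiguous; your argument shows the other fixed point of $g$ (in $(0,1)$) is repelling, so the ladder selects $W(-1,y)$, consistent with the bound (\ref{EA47}) and with the intended region $x<-1$. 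Your monotone-iteration remark (starting from $v_{0}=1$ the iterates increase to $v^{\star}$) also directly yields (\ref{EA47}) as the two-rung truncation, which is the only consequence of the lemma the paper actually uses. The one step you flag as bookkeeping --- matching finite truncations of the nest to Picard iterates --- is indeed the only place requiring care, and your contraction estimate disposes of it.
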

\begin{proof}
The proof and more details concerning the Lambert function can be
found in \cite{Pickett}.
\end{proof}
Clearly, using (\ref{EA46}) and the fact that the solution of
(\ref{EA44}) is also $x(y)=W(-1,y)$, one can obtain a simple upper
bound on the Lambert function in the interval of interest:
\begin{equation}\label{EA47}
    W(-1,y) \leq \ln{(-y)}-\ln{\bigl(-\ln{(-y)}\bigr)}.
\end{equation}
Since for $x_{1,lb}>x_{0}$, $\varphi{(x_{1,lb})} \in
]-\frac{1}{e},0[$ and $W(-1,\varphi{(x_{1,lb})}) < 0$, then
applying (\ref{EA47}) to $\varphi{(x_{1,lb})}$ yields:
\begin{eqnarray}
\label{EA48}
  W(-1,\varphi{(x_{1,lb})})&\leq&
\ln{\bigl(\frac{-\varphi{(x_{1,lb})}}{-\ln{\bigl(-\varphi{(x_{1,lb})}
\bigr)}} \bigr)} \\
  \label{EA49}
   &\leq & \ln{\bigl(\frac{-\varphi{(y)}}{-\ln{\bigl(-\varphi{(x_{1,lb})}
\bigr)}} \bigr)} \\
   \label{EA50}
   &\leq& \ln{\bigl(-\varphi{(y)} \bigr)}.
\end{eqnarray}
Inequality (\ref{EA49}) holds because $y>x_{1,lb}$ and
$\varphi{(\cdot)}$ is an increasing function, likewise
(\ref{EA50}) follows from the fact that for $x>x_{0}$,
$\varphi{(x)}> -\frac{1}{e}$ and thus
$\frac{1}{-\ln{\bigl(-\varphi{(x)}\bigr)}} < 1$. Moreover,
(\ref{EA50}) implies
\begin{equation}\label{EA51}
    \frac{y}{-\ln{\bigl(-\varphi{(y)}\bigr)}} \geq
    \frac{y}{-W(-1,\varphi{(x_{1,lb})})}=x_{1,lb}
\end{equation}
Applying again respectively $\varphi{(\cdot)}$ and $-W(-1,\cdot)$
to both sides of (\ref{EA51}) gives:
\begin{equation}\label{EA52}
x_{1,LB}^{(2)}=\frac{y}{\sqrt{-W\biggl(-1,\varphi{\Bigl(\frac{y}{-\ln{\bigl(
-\varphi{(y)}\bigr)}}\Bigr)}\biggr)}} \leq
   \frac{y}{\sqrt{-W(-1,\varphi{(x_{1,lb})})}}=x_{1,lb}.
\end{equation}
Finally, to prove that $x_{1,LB}^{(2)}$ is tighter than
$x_{1,LB}^{(1)}$, it is sufficient to note that since
$\varphi{(x_{1,lb})} \in ]-\frac{1}{e},0[$, $y > x_{1,lb}$ and
$\varphi{(\cdot)}$ is an increasing function, then $\varphi{(y)}
\in ]-\frac{1}{e},0[$ and we have consequently: $y
> \frac{y}{-\ln{\bigl(-\varphi{(y)} \bigr)}}$. Applying again respectively
$\varphi{(\cdot)}$ and $-W(-1,\cdot)$ to this inequality yields:
\begin{equation}\label{EA53}
    x_{1,LB}^{(1)}=\frac{y}{\sqrt{-W\bigl(-1,\varphi{(y)}\bigr)}} \leq
\frac{y}{\sqrt{-W\biggl(-1,\varphi{\Bigl(\frac{y}{-\ln{\bigl(-\varphi{(y)}\bigr)}}\Bigr)}\biggr)}}=x_{1,LB}^{(2)}.
\end{equation}
Combining (\ref{EA52}) and (\ref{EA53}), we have:
\begin{equation}\label{EA54}
    x_{1,LB}^{(1)} \leq x_{1,LB}^{(2)} \leq x_{1,lb},
\end{equation}
from which (\ref{E17}) follows by letting
$x_{1,LB}^{(2)}=x_{1,LB}$. Finally, (\ref{E18}) may be obtained by
applying (\ref{E14}) to $x_{1,LB}$. This completes the proof of
Theorem \ref{T2}.

\newpage
\begin{figure*}
\centerline{\subfigure[Very Low
SNR]{\includegraphics[width=2.5in]{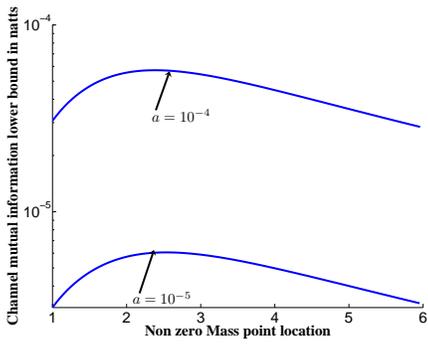}\label{Fig11}} \hfil
\subfigure[Low
SNR]{\includegraphics[width=2.5in]{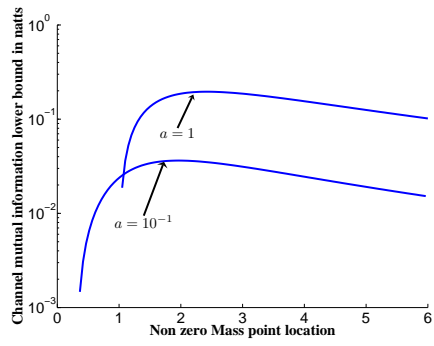}\label{Fig12}} \hfil
\subfigure[High
SNR]{\includegraphics[width=2.5in]{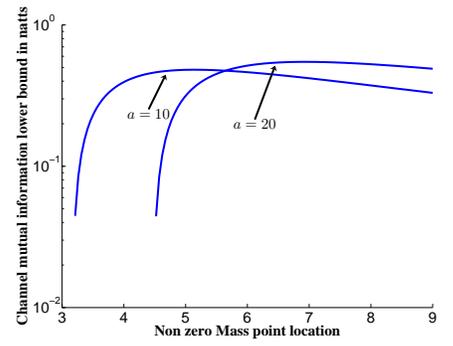}\label{Fig13}}}
\caption{Channel mutual information lower bound versus non-zero
mass point for 3 SNR regimes: a) Very Low SNR, b) Low SNR and c)
High SNR} \label{Fig1}
\end{figure*}

\clearpage

\begin{figure}[t]
  \begin{center}
    \includegraphics[scale=0.5]{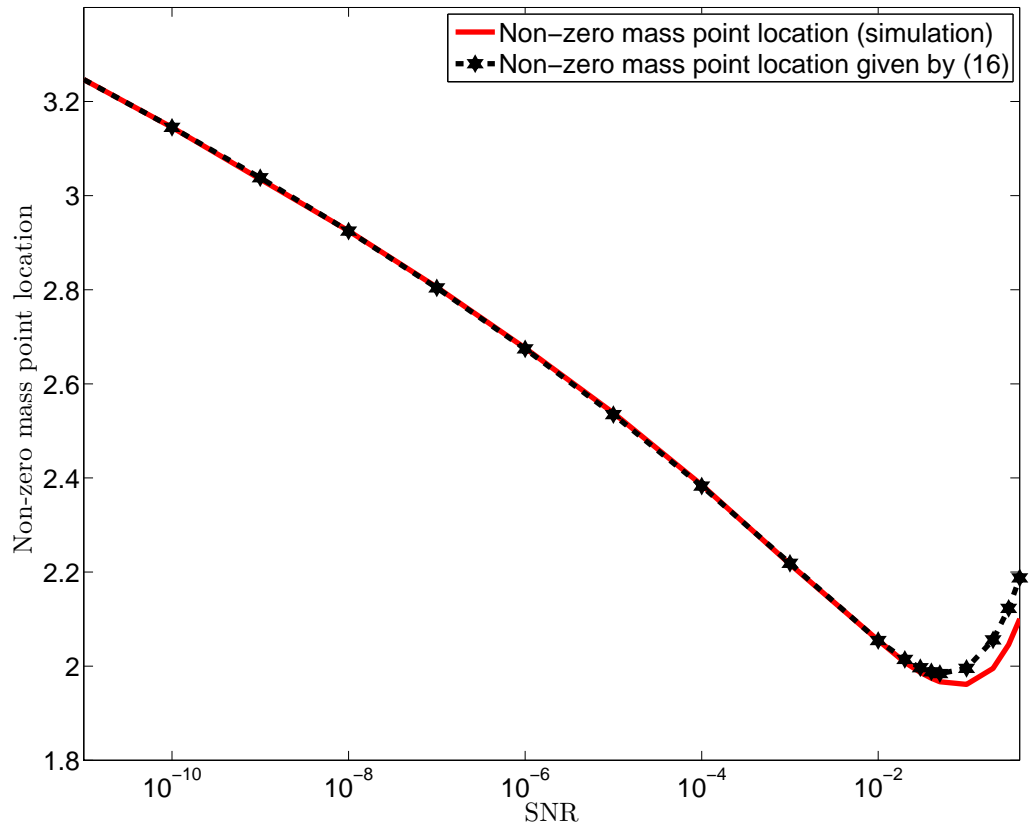}
        \caption{Location of non-zero mass point versus the SNR
        value $a$ (linear).}
    \label{Fig2}
  \end{center}
\end{figure}

\clearpage

\begin{figure}[t]
  \begin{center}
    \includegraphics[scale=0.5]{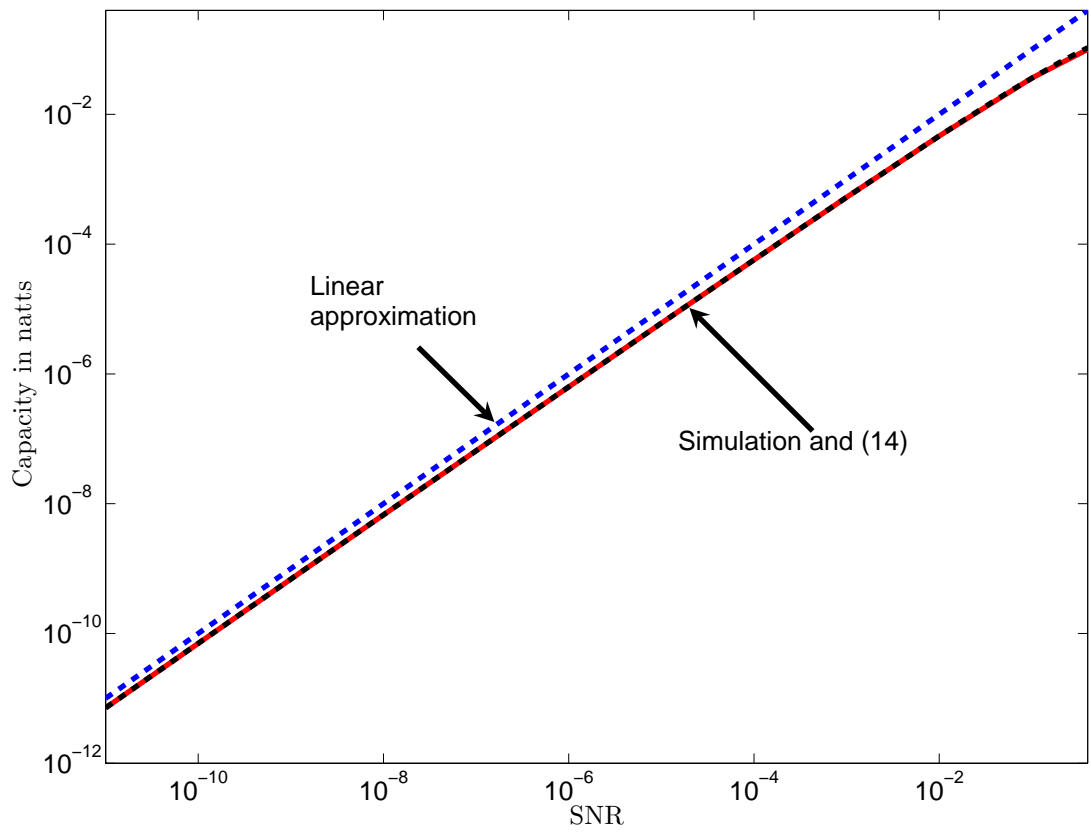}
        \caption{Non-coherent capacity versus the SNR
        value $a$ (linear).}
    \label{Fig3}
  \end{center}
\end{figure}

\clearpage

\begin{figure}[t]
  \begin{center}
    \includegraphics[scale=0.5]{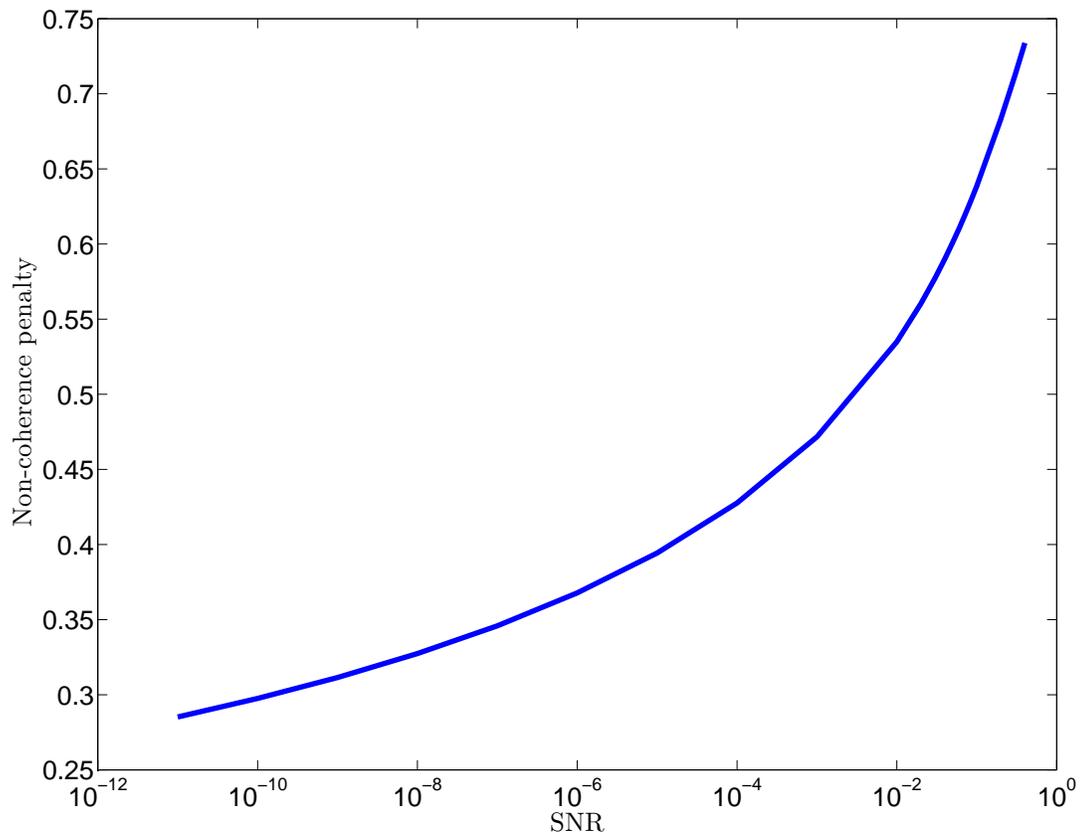}
        \caption{Non-coherentce penalty per SNR versus the SNR
        value $a$ (linear).}
    \label{Fig4}
  \end{center}
\end{figure}

\clearpage


\clearpage

\begin{figure}[t]
  \begin{center}
    \includegraphics[scale=0.5]{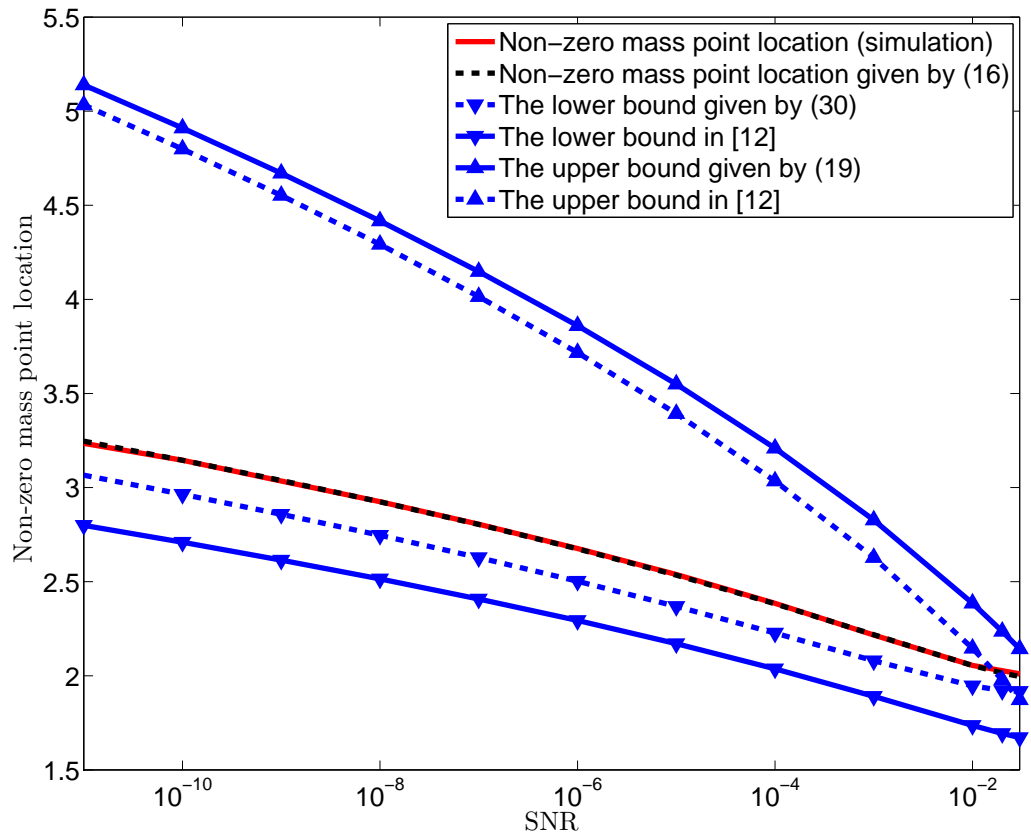}
        \caption{Exact non-zero mass point locations and the derived upper and lower bounds as well as those reported in \cite{Zheng-2007} versus the SNR
        value $a$ (linear).}
    \label{Fig5}
  \end{center}
\end{figure}

\end{document}